\newtheorem{theorem}{Theorem}[section]
\newtheorem{lemma}[theorem]{Lemma}
\newtheorem{proposition}[theorem]{Proposition}
\newtheorem{definition}[theorem]{Definition}
\newtheorem{reduction}{Reduction Rule}[section]
\newlength{\alginputwidth}
\newlength{\algboxwidth}
\newsavebox{\algbox}
\newsavebox{\captionbox}
    {
        \setlength{\algboxwidth}{\columnwidth}
        \addtolength{\algboxwidth}{-\columnsep}
        \addtolength{\algboxwidth}{-1mm}
        \setlength{\alginputwidth}{\algboxwidth}
        \addtolength{\alginputwidth}{-1.7cm}
        \begin{figure}[tb]
            \vspace*{2mm}
            \centering
            \begin{lrbox}{\captionbox}
                \begin{minipage}[b]{\algboxwidth}
                    \centering
                    \caption{#1}
                    \label{#2}
                \end{minipage}
            \end{lrbox}
            \begin{lrbox}{\algbox}
                \begin{minipage}[b]{\algboxwidth}
                    \footnotesize
                    \vspace*{2mm}
    } % end of begin
    {
                    \vspace*{0.2mm}
               \end{minipage}
            \end{lrbox}
            \fbox{\usebox{\algbox}\hspace*{1mm}}
            \usebox{\captionbox}
            \vspace*{-4mm}
        \end{figure}
    }
\newsavebox{\algcodebox}
    {
        \begin{enumerate}
            \setlength{\itemsep}{2pt}
            \setlength{\parsep}{0pt}
            \setlength{\topsep}{0pt}
            \setlength{\parskip}{0pt}
            \setlength{\partopsep}{0pt}
    } % end of begin
    {\end{enumerate}}
\date{}
\title{What makes normalized weighted satisfiability tractable}
\author{{\sc Iyad Kanj}\thanks{School of
Computing, DePaul University, 243 S. Wabash Avenue, Chicago, IL
60604. Email: {\tt ikanj@cs.depaul.edu.} Phone: (+1) 312-362-5558.
Fax: (+1) 312-362-6116.}  \and {\sc Ge Xia}\thanks{Department of Computer
Science, Acopian Engineering Center, Lafayette College, Easton PA
18042, USA. Email: {\tt gexia@cs.lafayette.edu}. Phone: (+1)
610-330-5415. Fax: (+1) 610-330-5059.}}
\begin{document}

\maketitle

\begin{abstract}
We consider the weighted antimonotone and the weighted monotone satisfiability problems on normalized circuits of depth at most $t \geq 2$, abbreviated {\sc wsat$^-[t]$} and {\sc wsat$^+[t]$}, respectively. These problems model the weighted satisfiability of antimonotone and monotone propositional formulas (including weighted anitmonoone/monotone {\sc cnf-sat}) in a natural way, and serve as the canonical problems in the definition of the parameterized complexity hierarchy. We characterize the parameterized complexity of {\sc wsat$^-[t]$} and {\sc wsat$^+[t]$} with respect to the genus of the circuit. For {\sc wsat$^-[t]$}, which is $W[t]$-complete for odd $t$ and $W[t-1]$-complete for even $t$, the characterization is precise: We show that {\sc wsat$^-[t]$} is fixed-parameter tractable (FPT) if the genus of the circuit is $n^{o(1)}$ ($n$ is the number of the variables in the circuit), and that it has the same $W$-hardness as the general {\sc wsat$^-[t]$} problem (i.e., with no restriction on the genus) if the genus is $n^{O(1)}$. For {\sc wsat$^+[2]$} (i.e., weighted monotone {\sc cnf-sat}), which is $W[2]$-complete, the characterization is also precise: We show that {\sc wsat$^+[2]$} is FPT if the genus is $n^{o(1)}$ and $W[2]$-complete if the genus is $n^{O(1)}$. For {\sc wsat$^+[t]$} where $t > 2$, which is $W[t]$-complete for even $t$ and $W[t-1]$-complete for odd $t$, we show that it is FPT if the genus is $O(\sqrt{\log{n}})$, and that it has the same $W$-hardness as the general {\sc wsat$^+[t]$} problem if the genus is $n^{O(1)}$.
\end{abstract}

\section{Introduction} \label{sec:intro}
We consider the {\em weighted satisfiability} problems on monotone and antimonotone normalized circuits of depth at most $t \geq 2$. In the {\sc antimonotone weighted satisfiability} problem on normalized circuits of depth at most $t \geq 2$, abbreviated {\sc wsat$^-[t]$}, we are given a circuit $C$ of depth $t$ in the {\em normalized} form~\cite{fptbook,grohebook} (i.e., the output gate is an {\sc and}-gate, and the gates alternate between {\sc and}-gates and {\sc or}-gates) whose input literals are all negative,  and an integer parameter $k \geq 0$, and we need to decide if $C$ has a satisfying assignment of weight $k$. In the {\sc monotone weighted satisfiability} on normalized circuits of depth at most $t \geq 2$, abbreviated {\sc wsat$^+[t]$}, we are given a circuit $C$ of depth $t$ in the normalized form whose input literals are positive, and an integer parameter $k \geq 0$, and we need to decide if $C$ has a satisfying assignment of weight $k$. Our goal in this paper is to characterize the parameterized complexity of {\sc wsat$^-[t]$} and {\sc wsat$^+[t]$} ($t \geq 2$) with respect to the genus of circuit. We define the genus of the circuit to be the genus of the underlying undirected graph after the output gate is removed. This definition of the genus of the circuit is more general than the natural definition in which the genus is defined to be that of the whole circuit (output gate included) because an upper bound on the genus of the whole circuit implies the same upper bound on the genus of the circuit with the output gate removed. More specifically, all the results derived in the current paper, including the $W$-hardness results, hold true when the genus is defined to be that of the whole circuit. We mention that it is known that the {\sc weighted circuit satisfiability} problem on planar circuits with the output gate included of depth at most $t$ is solvable in polynomial time~\cite{chennn}. On the other hand, it can be shown via straightforward polynomial-time reductions from the ${\cal NP}$-hard problems {\sc planar vertex cover} and {\sc planar independent set}, that
{\sc wsat$^-[t]$} and {\sc wsat$^+[t]$} ($t \geq 2$) are ${\cal NP}$-complete on planar circuits (and hence on circuits of any genus) with the output gate removed. We also note that {\sc weighted circuit satisfiability} on planar circuits with unbounded depth is known to be $W[P]$-complete~\cite{abrahamson}.

The problems under consideration are of prime interest both theoretically and practically. From the theoretical perspective, they naturally represent the weighted satisfiability of (montone/antimontone) $t$-normalized propositional formulas, i.e., products-of-sums-of-products...(see, for example,~\cite{fptbook,grohebook}), including the canonical problems weighted antimonotone/monotone {\sc cnf-sat}. Moreover, the {\sc wsat$^-[t]$} and the {\sc wsat$^+[t]$} problems are used as the canonical complete problems for the different levels of the parameterized complexity hierarchy, the $W$-hierarchy, and the $W$-hierarchy can be defined based on them~\cite{fptbook,grohebook}. In particular, the {\sc wsat$^-[t]$} problem is $W[t]$-complete for odd $t \geq 3$, and {\sc wsat$^+[t]$} problem is $W[t]$-complete for even $t \geq 2$. Therefore, revealing the underlying structure that makes these problems (parameterized) tractable is important from the perspective of complexity theory. From a more practical perspective, {\sc wsat$^-[t]$} and {\sc wsat$^+[t]$} can be used to model several natural graph problems. Therefore, as mentioned in Section~\ref{sec:conclusion}, the results derived in the current paper can be used to obtain fixed-parameter tractability results for natural graph problems on graphs whose genus meets certain upper bounds by reducing these problems to {\sc wsat$^-[t]$} and {\sc wsat$^+[t]$}.

The computational complexity of many natural problems on planar graphs, and more generally on graphs whose genus meets certain upper bounds, have been extensively researched (see~\cite{meta,demaine,demaine1,fomin1,fomin2}, among others). In particular, it was shown that the bounded-genus property plays a key-role in determining the computational complexity (parameterized complexity including kernelization, subexponential-time computability, approximation) of a large class of graph problems. For example, using {\em bidimensionality theory}, it was shown in~\cite{demaine} that a large class of graph problems admit subexponential-time parameterized algorithm on graphs whose genus is upper bounded by a constant.
For graphs of larger genus (could be unbounded), it was shown in~\cite{genus} that the genus characterizes the computational complexity (parameterized complexity, approximation, subexponential-time computability) of some natural graph problems, including {\sc independent set} and {\sc dominating set}. For example, it was shown in~\cite{genus} that {\sc independent set} is FPT if the genus of the graph (on $n$ vertices) is $o(n^2)$, and is $W[1]$-complete if the genus is $\Omega(n^2)$. %In the same  vein, we try in this paper to %characterize the parameterized complexity of the canonical {\sc wsat$^-[t]$} and {\sc wsat$^+[t]$} problems in terms of the genus of the circuit.

Research results on planar circuits, and on satisfiability problems defined on certain structures that are planar or that satisfy certain structural properties, are abundant.
Planar Boolean circuits have been extensively studied in the literature as they can be used to study VLSI chips, and they play an important role in deriving computational lower bounds for Boolean circuits~\cite{savage,turan,ingo}. After Lipton and Tarjan established their celebrated planar separator theorem, one of the first applications of the separator theorem they gave, was to derive lower bounds on the size of Boolean circuits that compute certain important functions~\cite{lipton}. The computational power of monotone planar circuits were also considered (e.g., see~\cite{monotone,mahajan}). Khanna and Motwani~\cite{KM96} studied the approximation of instances of satisfiability problems (weighted and unweighted) whose underlying structure is planar. More specifically, they studied satisfiability problems defined based on disjunctive normal form (DNF) formulas. The incidence graph of an instance of such problems is a bipartite graph that has a vertex for each variable and a vertex for each formula, and an edge between them if the variable occurs in the formula. They derived polynomial-time approximations schemes for instances of these problems whose underlying incidence graph is planar~\cite{KM96}. Cai et al.~\cite{limingapx} studied the parameterized complexity of the satisfiability problems introduced by Khanna and Motwani~\cite{KM96}, and showed that these problems are $W[1]$-hard even when the underlying incidence graph is planar. Researchers have also studied the parameterized complexity of {\sc cnf-sat} with respect to the treewidth of a graph defined based on the circuit (for example, see~\cite{szeider}).

In this paper, we characterize the parameterized complexity of {\sc wsat$^-[t]$} and {\sc wsat$^+[t]$} ($t \geq 2$) in terms of the genus of the circuit. For {\sc wsat$^-[t]$}, which is $W[t]$-complete for odd $t$ and $W[t-1]$-complete for even $t$, we give a tight characterization by showing that {\sc wsat$^-[t]$} is FPT if the genus of the circuit is $n^{o(1)}$ ($n$ is the number of the variables in the circuit), and that it has the same $W$-hardness as (the general) {\sc wsat$^-[t]$} if the genus is $n^{O(1)}$. The techniques used for deriving the FPT results for {\sc wsat$^-[t]$} can be summarized as follows. We first show how in {\em FPT-time} we can reduce an instance of {\sc wsat$^-[t]$} on circuits of genus $n^{o(1)}$ to an equivalent instance in which the number of occurrences of the literals is linear in $n$, and which has no {\em zero-variables}; we bound the number of occurrences using counting arguments that are based on Euler-type results for (multi) hypergraphs whose genus meets certain upper bounds. We then show that any instance of {\sc wsat$^-[t]$} in which the number of occurrences is linear and with no zero-variables admits a satisfying assignment whose weight is lower bounded by a function of $n$; this result is of independent interest. Combining the preceding two results, we conclude that the problem is FPT. For {\sc wsat$^+[t]$}, which is $W[t]$-complete for even $t$ and $W[t-1]$-complete for odd $t$, we give a tight characterization for $t=2$ (i.e., for weighted monotone {\sc cnf-sat}) by showing that {\sc wsat$^+[2]$} is FPT if the genus is $n^{o(1)}$ and $W[2]$-complete if the genus is $n^{O(1)}$. For $t  > 2$, we show that {\sc wsat$^+[t]$} is FPT if the genus is $O(\sqrt{\log{n}})$, and that it has the same $W$-hardness as {\sc wsat$^+[t]$} if the genus is $n^{O(1)}$. Both FPT results for $t = 2$ and $t > 2$ rely on a result showing that, for circuits of genus $n^{o(1)}$, there is a {\em Turing-fpt-reduction} that reduces an instance of {\sc wsat$^+[t]$} to {\em fpt-many} instances of the problem in which the number of gates that are incoming to the output gate of the circuit is a function of the parameter. Using this result, we can derive that {\sc wsat$^+[2]$} is FPT. For $t > 2$, we show that the aforementioned result implies that the treewidth of the circuit is $O(\log{n})$ if its genus is $O(\sqrt{\log{n}})$; this allows us to apply a dynamic programming approach to show that the problem  on genus $O(\sqrt{\log{n}})$ circuits is FPT. The hardness results for both {\sc wsat$^-[t]$} and {\sc wsat$^+[t]$} on circuits of genus $n^{O(1)}$ are derived by simple fpt-reductions from the general {\sc wsat$^-[t]$} and {\sc wsat$^+[t]$} problems.

Finally, we note that none of the algorithms presented in the current paper needs to  know in advance, nor needs it decide,
whether the minimum genus of the input circuit satisfies the required upper bounds or not.

\section{Preliminaries}
\label{sec:prelim}

\subsection{Graphs, hypergraphs, and genus}
\label{subsec:graphs}

We assume familiarity with the basic terminology and definitions in graph theory and parameterized complexity,
and refer the reader to~\cite{fptbook,grohebook,west} for more information.

Given an undirected graph $G$ and a vertex-set $S\subseteq V(G)$ such that the subgraph of $G$ induced by $S$, denoted $G[S]$, is connected, {\em contracting} $S$ in $G$ means removing all vertices in $S$ from $G$, and adding a new vertex that is adjacent to all former neighbors of the vertices in $S$ that are in $V(G) \setminus S$. For two adjacent vertices $u, v \in V(G)$, {\em contracting} the edge $uv$ in $E(G)$ means contracting the (connected) set of vertices $\{u, v\}$. Note that contracting an edge can result in a multigraph.

A {\em hypergraph} ${\cal H} = (V,E)$ consists of a {\em vertex set}
$V = V({\cal H})$ and an {\em edge set} $E = E({\cal H})$ so that $e
\subseteq V$ for every $e \in E$. If $E$ is allowed to be a multiset
(elements can repeat) we call ${\cal H}$ a {\em multihypergraph}. We also call the edges in a hypergraph {\em hyperedges}.

A graph has {\em genus $g$} if it can be drawn on a surface of genus
$g$ (a sphere with $g$ handles) without intersections. We say a
(multi)hypergraph ${\cal H}$ is {\em embeddable in a surface} if the
bipartite incidence graph obtained from ${\cal H}$ by replacing each
of its hyperedges by a vertex adjacent to all the vertices in the hyperedge is
embeddable in that surface. In particular, this definition allows us
to speak of a {\em planar (multi)hypergraph} or a {\em
(multi)hypergraph of genus $g$}. We refer the reader to~\cite{gross} for more information on the genus of a graph.

We have the following lemmas:

\begin{lemma}[\cite{im}]
\label{lem:edgebound}
A multihypergraph
of genus at most $g$ on $n$ vertices has at most $2n+4g-4$
 hyperedges containing at least three vertices, unless $n=1$ and $g=0$.
\end{lemma}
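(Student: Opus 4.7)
The plan is to bound the number of large hyperedges via an Euler-type argument applied to the bipartite incidence graph. Let $\mathcal{H}$ be a multihypergraph of genus at most $g$ on $n$ vertices, and let $m$ denote the number of its hyperedges of size at least three. I would first discard all hyperedges of size less than three from $\mathcal{H}$: this does not change $m$, and deleting hyperedges can only decrease the genus. Let $G$ be the bipartite incidence graph of the resulting multihypergraph. By the definition of genus for multihypergraphs, $G$ embeds on a surface of genus at most $g$; moreover $G$ is simple (even though hyperedges may repeat, each repeated copy contributes a distinct ``edge-vertex'') and bipartite.

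Next, I would combine Euler's formula with the bipartite girth bound. Since $G$ has no odd cycles, in any 2-cell embedding of $G$ on a surface of genus $g$ every face is bounded by at least four edges, which gives $2|E(G)| \geq 4|F|$ and hence $|F| \leq |E(G)|/2$. Substituting into $|V(G)| - |E(G)| + |F| = 2 - 2g$ yields
\[
|E(G)| \;\leq\; 2|V(G)| + 4g - 4.
\]
If $G$ fails to be connected or its embedding is not 2-cell, I would add a minimal collection of dummy edges between the two color classes to make it so, which preserves bipartiteness (and hence girth $\geq 4$) while only weakening the above inequality.

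Finally, note that $|V(G)| = n + m$ and $|E(G)| = \sum_{h}|h| \geq 3m$, since each of the $m$ retained hyperedges contributes at least three incidences. Substituting gives $3m \leq 2(n+m) + 4g - 4$, i.e., $m \leq 2n + 4g - 4$, which is the claimed bound. The excluded case $n=1$, $g=0$ is precisely where the right-hand side becomes negative; in that regime no hyperedge of size $\geq 3$ can exist at all, so the statement would be vacuously claiming $0 \leq -2$.

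The only real subtlety is making sure Euler's formula is applied in a setting where the embedding is actually 2-cell and the face-length inequality is valid; this is the standard dummy-edge trick, and once it is in place the rest is bookkeeping. No new ideas beyond the classical planar-bipartite bound $|E| \leq 2|V|-4$ generalized to genus $g$ are needed.
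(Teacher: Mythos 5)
Your proof is correct, and it is essentially the argument behind the cited result: the paper itself does not prove Lemma~\ref{lem:edgebound} but imports it from~\cite{im}, and the standard derivation is exactly your Euler/girth computation on the bipartite incidence graph, using $|E(G)|\geq 3m$ against $|E(G)|\leq 2|V(G)|+4g-4$. The only bookkeeping worth making explicit is that the face-length-$\geq 4$ count needs $m\geq 1$ (so that the augmented incidence graph is not edgeless or a single edge), with the $m=0$ case checked directly, which is precisely where the excluded case $n=1$, $g=0$ (negative right-hand side) arises.
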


\begin{lemma}[Euler]\label{lem:Euler}
  A graph of genus $g$ on $n$ vertices contains at most $3n + 6g-6$ edges
  if $n\geq 3$.
\end{lemma}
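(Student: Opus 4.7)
The plan is to invoke the classical Euler formula for a $2$-cell embedding of a graph on an orientable surface of genus $g$, namely
\[
V - E + F = 2 - 2g,
\]
and then use a simple double-counting argument on the face-edge incidences.

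First, I would fix an embedding of $G$ in the surface $S_g$ of genus $g$ that witnesses $G$ having genus at most $g$. Without loss of generality I may assume $G$ is connected: if it is not, I can add edges between components (this only increases the edge count) until it is; this does not change the genus upper bound if the embedding is adjusted accordingly, and proving the inequality for the enlarged graph implies it for the original. I may also assume the embedding is a $2$-cell embedding (on some surface of genus at most $g$), since every connected graph of genus $g$ admits such an embedding on $S_g$. Under these hypotheses, Euler's formula applies and gives $V - E + F = 2 - 2g$, with $V = n$.

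Next, I would bound $F$ from above in terms of $E$. Since $G$ is a (simple) graph and $n \geq 3$, every face in the embedding has a boundary walk of length at least $3$. Counting incidences of edges with faces, each edge is incident to at most two faces, so
\[
3F \leq 2E, \qquad \text{i.e.,} \qquad F \leq \tfrac{2}{3} E.
\]
Substituting into Euler's formula gives $n - E + \tfrac{2}{3} E \geq 2 - 2g$, which rearranges to $E \leq 3n + 6g - 6$, as claimed.

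The only subtle point, and hence the main thing to get right, is the reduction to the connected $2$-cell embedded case; everything else is routine. If $G$ is disconnected or the embedding is not $2$-cell, one can either add edges (and faces) to reach that situation while only weakening the inequality we are trying to prove, or work component by component and observe that summing the per-component bounds yields an inequality at least as strong as $E \leq 3n + 6g - 6$. Either route delivers the lemma.
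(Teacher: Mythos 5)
Your proof is correct and is the standard argument: a $2$-cell (minimum-genus) embedding gives $V-E+F=2-2g$, and the girth-$3$ face-counting bound $3F\leq 2E$ yields $E\leq 3n+6g-6$, with the reduction to the connected case handled soundly. The paper itself states this lemma as a classical fact without proof, so there is nothing to compare beyond noting that your argument is the canonical one found in standard references.
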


\begin{lemma}[\cite{im}]\label{lem:im}
 A hypergraph of genus at most $g$  on $n$ vertices has at most $6n+10g-10$
 hyperedges if $n \geq 3$.
\end{lemma}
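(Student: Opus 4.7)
My plan is to partition the hyperedge set $E({\cal H})$ into three classes according to cardinality: the singletons $E_1$ with $|e|=1$, the pairs $E_2$ with $|e|=2$, and the ``large'' hyperedges $E_{\geq 3}$ with $|e|\geq 3$. I would bound the three classes separately and then sum.

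The bound on $E_{\geq 3}$ is immediate from Lemma \ref{lem:edgebound}: we get $|E_{\geq 3}|\leq 2n+4g-4$, since the assumption $n\geq 3$ rules out the exceptional case $n=1$, $g=0$. The bound on $E_1$ is also easy: because ${\cal H}$ is a hypergraph and not a multihypergraph, its hyperedges are pairwise distinct as sets, so there is at most one singleton per vertex, giving $|E_1|\leq n$.

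The interesting step is bounding $|E_2|$. Since distinct hyperedges are distinct subsets, the elements of $E_2$ form a \emph{simple} graph $G_2=(V,E_2)$ on $n$ vertices. I would argue that $G_2$ has genus at most $g$ as follows. Let $B$ be the bipartite incidence graph witnessing that ${\cal H}$ has genus at most $g$, and let $B'\subseteq B$ be the sub-incidence-graph consisting of $V$ together with the vertices representing the $2$-hyperedges (each of which has degree exactly $2$ in $B'$). Then $B'$ has genus at most $g$, and suppressing each degree-$2$ vertex of $B'$ (deleting it and replacing its two incident edges by a single edge joining its two neighbors) yields precisely $G_2$. This suppression is the inverse of an edge subdivision, so it preserves embeddability in any surface; moreover, the distinctness of the hyperedges ensures that no parallel edges are created. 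Thus $G_2$ is a simple graph of genus at most $g$ on $n\geq 3$ vertices, and Lemma \ref{lem:Euler} gives $|E_2|\leq 3n+6g-6$.

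Summing the three bounds yields $|E({\cal H})|\leq n+(3n+6g-6)+(2n+4g-4)=6n+10g-10$, as required. The main obstacle I anticipate is justifying the genus-preservation step for $G_2$ cleanly -- in particular, confirming that the degree-$2$ suppression operation is truly genus non-increasing in a rotation-system sense, and that the hypergraph (vs.\ multihypergraph) hypothesis really does suffice to rule out parallel edges in the resulting simple graph; everything else reduces to direct counting or a single invocation of the preceding lemmas.
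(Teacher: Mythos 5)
The paper does not prove this lemma at all: Lemma~\ref{lem:im} is quoted from the reference \cite{im} without proof, just like Lemmas~\ref{lem:edgebound} and~\ref{lem:Euler}. So there is no ``paper proof'' to compare against; what you have written is a self-contained derivation of the cited bound from the other two cited lemmas, and it is correct. The decomposition into singletons, pairs, and hyperedges of size at least three is the natural one, and the three bounds $n$, $3n+6g-6$, and $2n+4g-4$ sum to exactly $6n+10g-10$. The step you flag as the main obstacle is in fact standard: $B'$ is a subgraph of the incidence graph and hence has genus at most $g$, and suppressing a degree-$2$ vertex replaces a graph by one homeomorphic to it (equivalently, $G_2$ is a topological minor, indeed a minor, of $B'$), so embeddability in the genus-$g$ surface is preserved; the hypergraph hypothesis does rule out parallel edges since distinct $2$-hyperedges are distinct pairs, so Lemma~\ref{lem:Euler} applies to the simple graph $G_2$ on $n\geq 3$ vertices. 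The only pedantic gap is the possibility of the empty hyperedge $e=\emptyset$, which the paper's definition of a hypergraph does not explicitly exclude and which your count of $E_1$ does not cover; if one insists on allowing it, the bound as stated could be off by one, but this degenerate case is universally excluded and not worth worrying about. One small simplification you could make: rather than invoking suppression, just observe that contracting one of the two incidence edges at each $2$-hyperedge vertex exhibits $G_2$ as a minor of $B'$, and genus is minor-monotone.
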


\subsection{Circuits, weighted satisfiability, and complexity functions}
\label{subsec:circuits}

A {\it circuit} is a directed acyclic graph. The vertices of
indegree $0$ are called the (input) {\it variables}, and are labeled either
by {\it positive literals} $x_i$ or by {\it negative literals}
$\overline{x}_i$. The vertices of indegree larger than $0$ are
called the {\it gates} and are labeled with Boolean operators
{\sc and} or {\sc or}. A special gate of outdegree $0$ is
designated as the {\it output} gate. We do not allow {\sc not}
gates in the above circuit model, since by De Morgan's
laws, a general circuit can be effectively converted into the
above circuit model. A circuit is said to be {\it monotone}
(resp. {\it antimonotone}) if all its input literals are positive
(resp. negative). The {\it depth} of a circuit is the maximum
distance from an input variable to the output gate of the circuit.
A circuit represents a Boolean function in a natural way. The size of a circuit $C$, denoted $|C|$, is the size of the underlying graph (i.e., number of vertices and edges).
An {\em occurrence} of a literal in $C$ is an edge from the literal to a gate in $C$. Therefore, the total number of occurrences of the literals in $C$ is the number of incoming edges from the literals in $C$ to its gates.

The {\em genus of a circuit} is the genus of the underlying undirected graph after the output gate has been removed.
We note that the definition that we use is more general than the natural definition, which defines the genus to be that of the whole circuit, i.e., including the output gate of the circuit (as explained in  Section~\ref{sec:intro}).

We consider circuits whose output gate is an {\sc and}-gate and that are in the {\em normalized} form (see~\cite{fptbook,grohebook}).
In the normalized form every (nonvariable) gate has outdegree at most 1, and starting from the output {\sc and}-gate, the gates are structured into alternating levels of {\sc or}s-of-{\sc and}s-of-{\sc or}s... We denote a circuit that is in the normalized form and that is of depth at most $t \geq 2$ by a $\Pi_t$ circuit. We write $\Pi_t^+$ to denote a monotone $\Pi_t$ circuit, and $\Pi_t^-$ to denote an antimonotone $\Pi_t$ circuit. We do not assume that the literals appear at the same (top) level of the circuit. $\Pi_t$ circuits naturally represent the satisfiability of $t$-normalized propositional formulas; that is, formulas that are products-of-sums-of-products...(see, for example,~\cite{fptbook,grohebook}), including the canonical problem {\sc cnf-sat}, which is complete for the class ${\cal NP}$.

Throughout this paper, we implicitly assume that the following simplifications are performed always (i.e., as soon as one of them applies).
The first simplification takes place when there exist two gates of the same type (i.e., both are {\sc or}-gates or both are {\sc and}-gates) such that one is incoming to the other. In this case the two gates are {\em merged} into a new gate of the same type (i.e., the edge between them is contracted and possible multiple edges are removed); note that this reduction does not increase the genus of the circuit, even if one of the two gates is the output gate of the circuit. The second simplification takes place when there is a gate $g$ of indegree 1 in $C$. In this case we connect the input of $g$ to the gate that $g$ is incoming to (i.e., contract the edge between $g$ and the gate that $g$ is incoming to), and remove $g$. Again, note that this simplification does not increase the genus of the circuit. We will assume at every point that: every gate with outdegree 0 except the output gate is removed, every gate has indegree at least 2, and that no two gates of the same type such that one is incoming to the other exist. Note that the resulting circuit from the aforemention simplifications is equivalent to the original circuit.

We say that a truth assignment $\tau$ to the variables
of a circuit $C$ {\it satisfies} a gate $g$ in $C$ if $\tau$ makes the gate $g$ have
value $1$, and that $\tau$ {\it satisfies the circuit $C$} if $\tau$
satisfies the output gate of $C$. A circuit $C$ is {\it satisfiable}
if there is a truth assignment to the input variables of $C$ that
satisfies $C$. The {\it weight} of an assignment $\tau$ is the number
of variables assigned value $1$ by $\tau$. An indegree-2 gate is called a {\em 2-literal gate} if both its incoming edges are from literals.
A {\em critical gate} in a $\Pi_t$ circuit $C$ is an {\sc or}-gate that is connected to the output {\sc and}-gate
of the circuit; clearly, any satisfying assignment to $C$ must satisfy all critical gates in $C$. If we remove the literals from $C$, we obtain a directed graph whose underlying  undirected graph is a tree $T_C$. If we root $T_C$ at the output gate of $C$, we can now use the terms {\em child(ren)}, {\em parent}, {\em grandparent} of a gate in $T_C$ in a natural way. Note that every literal in $C$ is connected to some gates in $T_C$. For a gate $g$ in $T_C$, we denote by $T_g$ the subtree of $T_C$ rooted at $g$. We may regard an edge in $T_C$ between a child $g'$ of a gate $g$ and $g$, or between a literal and gate $g$, as an incoming edge to $g$.

A {\it parameterized problem} $Q$ is a subset of $\Omega^* \times
\mathbb{N}$, where $\Omega$ is a fixed alphabet and $\mathbb{N}$
is the set of all non-negative integers. Each instance of the
parameterized problem $Q$ is a pair $(x, k)$, where the second
component, i.e., the non-negative integer $k$, is called the {\it
parameter}. We say that the parameterized problem $Q$ is
{\it fixed-parameter tractable} \cite{fptbook}, shortly $FPT$, if there is a
(parameterized) algorithm that decides whether an input $(x, k)$
is a member of $Q$ in time $f(k)|x|^{O(1)}$, where $f(k)$ is a computable function independent of the
input length $|x|$.  Let $FPT$ denote the class of all fixed-parameter
tractable parameterized problems. (We abused the notation ``$FPT$" above for simplicity.) A parameterized problem $Q$
is {\it fpt-reducible} to a parameterized problem $Q'$ if there is
an algorithm that transforms each instance $(x, k)$ of $Q$
into an instance $(x', g(k))$ ($g$ is a function of $k$ only) of
$Q'$ in time $f(k)|x|^{O(1)}$, where $f$ and $g$ are computable
functions of $k$, such that $(x, k) \in Q$ if and
only if $(x', g(k)) \in Q'$. Based on the notion of fpt-reducibility, a hierarchy of
fixed-parameter intractability, {\it the $W$-hierarchy} $\bigcup_{t
\geq 0} W[t]$, where $W[t] \subseteq W[t+1]$ for all $t \geq 0$, has
been introduced, in which the $0$-th level $W[0]$ is the class {\it
$FPT$}. The hardness and completeness have been defined for each level
$W[i]$ of the $W$-hierarchy for $i \geq 1$ \cite{fptbook}. It is
commonly believed that $W[1] \neq FPT$ (see \cite{fptbook}). The
$W[1]$-hardness has served as the hypothesis for fixed-parameter
intractability.

For $t \geq 2$, the {\sc weighted $\Pi_t$-circuit satisfiability} problem,
abbreviated {\sc wsat$[t]$} is for a given $\Pi_t$-circuit $C$ and
a given parameter $k$, to decide if $C$ has a satisfying assignment
of weight $k$. The {\sc weighted monotone $\Pi_t$-circuit
satisfiability} problem, abbreviated {\sc wsat$^+[t]$}, and the
{\sc weighted antimonotone $\Pi_t$-circuit satisfiability} problem,
abbreviated {\sc wsat$^-[t]$} are the {\sc wsat$[t]$} problems on
monotone circuits and antimonotone circuits, respectively. We denote
by {\sc wsat$^-$} the {\sc wsat$^-[2]$} problem, and by {\sc wsat$^+$} the {\sc wsat$^+[2]$} problem (i.e., the weighted antimonotone/monotone {\sc cnf-sat} problem).
It is known that for each even integer $t \geq 2$,
{\sc wsat$^+[t]$} is $W[t]$-complete, and for each odd integer $t
\geq 2$, {\sc wsat$^-[t]$} is $W[t]$-complete; moreover, {\sc wsat$^-$}  is $W[1]$-complete~\cite{fptbook,grohebook}.

The (time) complexity functions used in this paper are assumed to be proper complexity functions that are unbounded and nondecreasing.
For a complexity function $f: \mathbb{N} \rightarrow \mathbb{N}$, we define its inverse, $f^{-1}$, by $f^{-1}(h) = \max\{q \mid f(q)
\leq h\}$. Since the function $f$ is nondecreasing and unbounded,
the function $f^{-1}$ is also nondecreasing and unbounded, and
satisfies $f(f^{-1}(h)) \leq h$. We shall also assume that the complexity functions and their inverses
can be computed efficiently (i.e., in time linear in the input size and the value of the function).
The $o(\cdot)$ notation used in this paper denotes the $o^{\mbox{eff}}(\cdot)$ notation (see,
for instance, \cite{grohebook}). More formally, for any two computable functions $f, g: \mathbb{N} \rightarrow \mathbb{N}$, by writing
$f(n) = o(g(n))$ we mean that there exists a computable nondecreasing
unbounded function $\mu(n) : \mathbb{N} \rightarrow \mathbb{N}$,
and $n_0 \in \mathbb{N}$, such that $f(n) \leq g(n)/\mu(n)$ for
all $n \geq n_0$.

By {\em fpt-time}, we denote time complexity of the form $f(k)N^{O(1)}$, where $N$ is the input length,
and $k$ is the parameter, and $f$ is a complexity function of $k$.

The following lemma is folklore:

\begin{lemma}
\label{lem:fpttime}
The two functions $N^{o(1)h(k)}$ and $(\log{N})^{h(k)}$ ($f, h$ are complexity functions) are bounded above by $f(k)N^{O(1)}$. Therefore, if a parameterized problem is solvable in time that is upper bounded by either of these two functions, where $N$ is the input length and $k$ is the parameter, then the problem is
solvable in fpt-time, and hence is $FPT$.
\end{lemma}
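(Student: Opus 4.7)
The plan is to establish, for each of the two expressions, an upper bound of the form $f(k)+N$ for a suitable computable function $f$, by splitting cases on the size of $N$ relative to a function of $k$. Either of these inequalities immediately implies the desired $f(k)N^{O(1)}$ bound (since $f(k)+N\leq f(k)\cdot N$ whenever $f(k),N\geq 2$) and hence the $FPT$ consequence.

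For $N^{o(1)h(k)}$, the first step is to unpack the effective $o(1)$ notation as defined in Section~\ref{subsec:circuits}: there exist a nondecreasing unbounded computable function $\mu:\mathbb{N}\rightarrow\mathbb{N}$ and a constant $N_{0}\in\mathbb{N}$ such that the $o(1)$ factor is at most $1/\mu(N)$ for all $N\geq N_{0}$, giving $N^{o(1)h(k)}\leq N^{h(k)/\mu(N)}$. I would then split on whether $N>\mu^{-1}(h(k))$. If yes, then by definition of $\mu^{-1}$ we have $\mu(N)>h(k)$, so $h(k)/\mu(N)<1$ and hence $N^{h(k)/\mu(N)}<N$. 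Otherwise $N\leq\mu^{-1}(h(k))$ is bounded by a computable function of $k$ alone, and $N^{h(k)/\mu(N)}\leq N^{h(k)}\leq (\mu^{-1}(h(k)))^{h(k)}$ is itself a function of $k$. Setting $f(k):=(\mu^{-1}(h(k)))^{h(k)}$ and absorbing the finitely many values $N<N_{0}$ into $f(k)$ yields $N^{o(1)h(k)}\leq f(k)+N$ as desired.

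For $(\log N)^{h(k)}$, the case split is on whether $\log N\geq h(k)^{2}$. In the first case $h(k)\leq\sqrt{\log N}$, so $\log\bigl((\log N)^{h(k)}\bigr)=h(k)\log\log N\leq\sqrt{\log N}\,\log\log N=o(\log N)$, and therefore $(\log N)^{h(k)}\leq N$ for all sufficiently large $N$. In the second case $\log N<h(k)^{2}$, whence $(\log N)^{h(k)}<h(k)^{2h(k)}$, a function of $k$ alone; taking $f(k):=h(k)^{2h(k)}$ (adjusted to cover the finite threshold and to be nondecreasing) yields $(\log N)^{h(k)}\leq f(k)+N$.

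The argument is essentially a direct computation, and I do not expect a serious obstacle. The only point that requires care is correctly invoking the \emph{effective} interpretation of $o(1)$, together with the assumption stated in the preliminaries that complexity functions and their inverses are efficiently computable, so that the resulting bounding function of $k$ is itself a proper complexity function (computable, nondecreasing, and unbounded) rather than merely a real-valued upper bound.
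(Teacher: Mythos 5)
Your proof is correct and follows essentially the same route as the paper's (which is only a two-line sketch invoking a ``folklore argument''): the case split on $N\gtrless\mu^{-1}(h(k))$ is exactly the paper's split on $h(k)\lessgtr\mu(N)$, and your treatment of $(\log N)^{h(k)}$ is the analogous argument the paper leaves to the reader. You simply supply the details the paper omits, including the care about the effective $o(1)$ and the computability of the resulting bounding function.
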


\begin{proof} (Sketch)
Suppose that the parameterized problem is solvable in time $N^{f(k)/\mu(N)}$, for some complexity function $\mu(N)$. By considering the two cases $f(k) \leq \mu(N)$ and $f(k) > \mu(N)$ (and hence, $k > \mu^{-1}(N)$),
it can be shown using a folklore argument that the problem is $FPT$. The proof for the other function is similar.
\end{proof}

\section{The antimonotone case}
\label{sec:antimonotone}

In this section we give a complete characterization of the parameterized complexity of the {\sc wsat$^-[t]$} problem ($t \geq 2$) with respect to the genus of the circuit. We start with the following hardness result:

\begin{theorem}\label{lem:hardnessantimonotone}
Let $c > 0$ be a constant. The {\sc wsat$^-[t]$} ($t \geq 2$) problem on circuits of genus $g(n) =\Omega(n^c)$, where $n$ is the number of variables in the circuit, is $W[t]$-complete for odd $t$ and $W[t-1]$-complete for even $t$.
\end{theorem}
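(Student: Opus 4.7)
Membership in $W[t]$ for odd $t$ and in $W[t-1]$ for even $t$ is inherited for free, since the genus-restricted problem is a subproblem of the unrestricted one. All the work is therefore in the hardness direction, which I would obtain by a polynomial-time many-one fpt-reduction from the unrestricted {\sc wsat$^-[t]$} problem that, given any input instance, attaches a small dense ``padding'' subcircuit on only a constant number of fresh variables in order to inflate the genus past $\Omega((n')^c)$, where $n'$ is the number of variables of the produced circuit.

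Given $(C,k)$ with $N$ variables, the plan is to introduce five fresh variables $y_1,\dots,y_5$ and $q = \Theta(N^c)$ fresh OR-gates, each taking all five negative literals $\bar y_1,\dots,\bar y_5$ as inputs, and to add each of these $q$ new ORs as a new input of the output AND-gate of $C$. (These ORs sit one level below the output AND, which is valid for every $t\geq 2$ because the normalized form explicitly allows literals to appear at any level.) The resulting circuit $C'$ is an antimonotone $\Pi_t$-circuit with $n' = N+5$ variables; the new parameter is $k' = k+4$, so $k'$ depends only on $k$, and the whole construction runs in polynomial time, yielding a valid fpt-reduction.

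For the genus, after the output AND-gate is removed the padding becomes a connected component of the underlying undirected graph that is disjoint from the rest of $C$, and this component contains $K_{5,q}$ as a subgraph; by the standard formula the genus of $K_{5,q}$ equals $\lceil 3(q-2)/4\rceil = \Theta(q) = \Theta(N^c)$, so the genus of $C'$ is $\Omega(N^c) = \Omega((n')^c)$, as required. For correctness, I would rely on the fact that antimonotone $\Pi_t$-circuits have downward-closed sets of satisfying assignments (flipping any variable from $1$ to $0$ only raises the value of every negative literal, which propagates monotonically through AND- and OR-gates), so their sets of satisfying weights are intervals $\{0,1,\dots,k^\star\}$. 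Because the $q$ bottom ORs of the padding all share the same five inputs, they are simultaneously satisfied iff at least one $y_i=0$, and hence the padding's maximum satisfying $y$-weight in isolation is exactly $4$ (four $y_i$'s set to $1$ and one to $0$). Since the padding's variables are disjoint from those of $C$, the maximum satisfying weight of $C'$ equals $k^\star_C+4$, and therefore $(C,k)$ is a YES-instance iff $k\leq k^\star_C$ iff $k+4\leq k^\star_C+4$ iff $(C',k+4)$ is a YES-instance.

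The most delicate point is verifying that the $q$ parallel copies of the bottom OR (each with the same five literal inputs and each feeding into the same output AND) really do survive the simplification rules of Section~\ref{subsec:circuits}: those rules merge only two same-type gates one of which feeds into the other, and they collapse only gates of indegree $1$. Our bottom ORs all have indegree $5\geq 2$ and none of them feeds into another, so they remain distinct vertices of the underlying graph and the $K_{5,q}$-incidence that supplies the genus is preserved. Everything else reduces to the standard Euler-type genus estimate for $K_{5,q}$ and to the downward-closedness of satisfying assignments of antimonotone circuits sketched above.
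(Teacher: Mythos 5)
Your construction goes in the opposite direction from what the theorem actually requires. Throughout the paper, ``circuits of genus $g(n)$'' means circuits whose genus is \emph{at most} $g(n)$: this reading is forced by the companion FPT results (e.g.\ Proposition~\ref{prop:removezeros} and Theorem~\ref{thm:main1} treat $g(n)$ as an upper bound on the genus), and the point of the hardness direction is the dichotomy --- the problem on circuits of genus at most $g(n)$ is FPT when $g(n)=n^{o(1)}$ and becomes $W[t]$-hard as soon as the genus \emph{budget} $g(n)$ is allowed to reach $\Omega(n^c)$. Accordingly, the fpt-reduction from unrestricted {\sc wsat$^-[t]$} must output a circuit $C'$ together with a certified \emph{upper} bound $\mathrm{genus}(C')\le g(n')$. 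Your reduction certifies only a lower bound: the $K_{5,q}$ gadget guarantees genus at least $\lceil 3(q-2)/4\rceil=\Theta(N^c)$, but (i) this quantity need not be at most $c'(n')^c$ for the given constant $c'$ in $g(n)=c'n^c$, and, more seriously, (ii) the genus of the original part of $C$ is completely uncontrolled --- a $\Pi_t^-$ circuit on $N$ variables can have arbitrarily many gates and hence genus far exceeding any fixed polynomial in $N$ --- so $C'$ may simply lie outside the class the theorem is about. What you have actually proved (hardness on circuits whose genus is bounded \emph{below} by $\Omega(n^c)$) is a different, incomparable assertion.

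The fix is the paper's much simpler padding in the other direction: if $m_0$ is the number of vertices of $C_0$ (so $m_0^2$ trivially bounds its genus) and $m_0^2 > c'n_0^c$, add $\lceil (m_0^2/c')^{1/c}\rceil - n_0$ fresh negative literals directly into the output {\sc and}-gate. These literals must evaluate to $1$ in any satisfying assignment, so the new variables are forced to $0$ and the parameter is unchanged, while $n'$ grows until $\mathrm{genus}(C')\le m_0^2\le c'(n')^c=g(n')$ holds. Your membership argument and your observation about downward-closedness of the satisfying assignments of antimonotone circuits are both fine, and your gadget would be the right tool if one additionally wanted the produced circuits to have genus $\Theta(n^c)$ rather than merely at most $g(n)$ --- but as written it does not establish the theorem the paper is proving.
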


\begin{proof}
To prove the hardness result in the theorem, we show that {\sc wsat$^-[t]$} is fpt-reducible to {\sc wsat$^-[t]$} on circuits of genus $g(n) =\Omega(n^c)$. Since {\sc wsat$^-[t]$} is $W[t]$-hard for odd $t$, and $W[t-1]$-hard for even $t$, the hardness result follows. Suppose that $g(n) = c'n^c$, for some constant $c' > 0$.

Let $(C_0, k)$ be an instance of {\sc wsat$^-[t]$}, where $C_0$ is a $\Pi_{t}^-$ circuit and $k$ is the parameter. Suppose that $C_0$ has $n_0$ variables and $m_0$ gates (including the variables). Therefore, the genus of $C_0$ is at most $m_0^2$. If $m_0^2 \leq c'n_0^c$, then the fpt-reduction outputs the instance $(C, k)$, where $C= C_0$. If $m_0^2 > c'n_0^c$, let $C$ be the circuit obtained from $C_0$ by adding $\lceil(m_0^2/c')^{(1/c)}\rceil - n_0$ new negative literals that are incoming to the output {\sc and}-gate of $C_0$. The fpt-reduction outputs the instance $(C, k)$. Obviously, the genus of $C$ is at most that of $C_0$, which is at most $m_0^2$. It can be easily verified that the genus of $C$, in both cases, is at most $c'n^c$, where $n$ is the number of variables in $C$. Noting that the new literals (if added) must be assigned value 1, and hence their corresponding variables value 0, by any satisfying assignment of $C$, we conclude that $C_0$ has a weight-$k$ satisfying assignment if and only if $C$ has a weight-$k$ satisfying assignment. It follows that the above reduction is an fpt-reduction from {\sc wsat$^-[t]$} to {\sc wsat$^-[t]$} on circuits of genus $g(n) =\Omega(n^c)$.

The completeness of the problem follows from the membership of {\sc wsat$^-[t]$} in $W[t]$ for odd $t$, and in $W[t-1]$ for even $t$.
\end{proof}

\begin{definition}\rm
Let $C$ be a $\Pi_t^-$ circuit, and let $x_i$ be a variable in $C$. We say that $x_i$ is a {\em zero-variable} for $C$ if assigning $x_i = 1$ causes $C$ to evaluate to 0. Therefore, any zero-variable for $C$ must be assigned the Boolean value 0 in a satisfying truth assignment for $C$. A {\em nonzero-variable} for $C$ is a variable that is not a zero-variable for $C$. A $\Pi_t^-$ circuit $C$ {\em has no zero-variables} if all the variables in $C$ are nonzero-variables.
\end{definition}

We note that determining whether a variable $x_i$ is a zero-variable for a $\Pi_t^-$ circuit $C$ can be done in polynomial time.

\begin{proposition}
\label{prop:removezeros}
Let $(C, k)$ be an instance of {\sc wsat$^-[t]$} ($t \geq 2)$ such that the genus of $C$ is $g(n)=n^{o(1)}$. In fpt-time, we can either solve $(C, k)$, or reduce it to an equivalent instance $(C', k)$ where $C'$ has genus at most $g(n)$ and no zero-variables, and such that the number of variables $n'$ in $C'$ satisfies $g(n) \leq n' \leq n$.
\end{proposition}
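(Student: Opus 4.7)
The plan is to handle this by iteratively purging zero-variables from $C$, and only falling back to brute force if the variable count drops below $g(n)$. Concretely, I would maintain a current circuit $C'$ (initially $C$) and repeatedly do the following: test each remaining variable for being a zero-variable of $C'$ in polynomial time (as already noted after the definition), and if some $x_i$ is a zero-variable, propagate the assignment $\overline{x}_i = 1$ into $C'$, re-normalizing using the simplifications spelled out in Section~\ref{subsec:circuits}. Because every satisfying assignment must assign $0$ to every zero-variable, and a variable assigned $0$ contributes nothing to the weight, each such step preserves equivalence with the original instance $(C, k)$. The loop runs at most $n$ times and every iteration is polynomial, so the total cost of the reduction phase is polynomial in $|C|$.

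The key structural claim is that neither setting a negative literal to $1$ nor re-normalizing ever increases the genus of $C'$. Setting $\overline{x}_i = 1$ either removes that literal from an {\sc and}-gate it feeds, or collapses an {\sc or}-gate it feeds into the constant $1$, which is then eliminated upward. Each such step is a combination of vertex/edge deletions and edge contractions on the incidence graph, and all of these are genus-non-increasing. The follow-up simplifications (merging adjacent same-type gates, contracting indegree-$1$ gates) are exactly the reductions already declared in the Preliminaries to preserve genus. Thus throughout the loop $C'$ remains a $\Pi_t^-$ circuit whose genus is at most that of $C$, hence at most $g(n)$.

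When the loop terminates with no zero-variables remaining, let $n'$ be the current variable count. If $n' \geq g(n)$, we output $(C', k)$: it is equivalent to $(C, k)$, has no zero-variables, and its genus is at most $g(n)$, with $g(n) \leq n' \leq n$ as required. Otherwise $n' < g(n) = n^{o(1)}$, and we solve $(C', k)$ directly by enumerating all $\binom{n'}{k}$ weight-$k$ assignments and testing each in polynomial time. Writing $g(n) \leq n^{1/\mu(n)}$ for a computable nondecreasing unbounded $\mu$ (as furnished by the effective definition of $o(\cdot)$), the enumeration cost is $\binom{n'}{k} \cdot |C|^{O(1)} \leq n^{k/\mu(n)} \cdot |C|^{O(1)}$, which by Lemma~\ref{lem:fpttime} fits into fpt-time.

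The main obstacle I anticipate is keeping the bookkeeping honest during the propagation: after one zero-variable is eliminated, other variables may become zero-variables of the simplified circuit, and the normalization may need to be reapplied. I must confirm that this cascade does not violate any invariant we rely on later (depth bound at most $t$, antimonotonicity, indegree-$\geq 2$ at every gate, the output-gate being an {\sc and}), beyond what the declared simplifications already restore, and that these simplifications can be applied in any order without blowup. Once those routine invariants are verified, the rest of the argument is a straightforward combination of the polynomial loop and Lemma~\ref{lem:fpttime}.
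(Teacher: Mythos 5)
Your proposal is correct and follows essentially the same route as the paper: zero-variables are forced to $0$ by any satisfying assignment, so eliminating them preserves equivalence, the parameter, and the genus, and if fewer than $g(n)=n^{o(1)}$ variables survive, the $\binom{n'}{k}=n^{o(1)\cdot k}$ enumeration is fpt-time by Lemma~\ref{lem:fpttime}. The cascade you flag as the main obstacle does not actually occur: by antimonotonicity a nonzero-variable $x_j$ is witnessed by the assignment setting $x_j=1$ and all else $0$, which survives the removal of the zero-variables, so the paper assigns all zero-variables the value $0$ in one shot and observes that no new zero-variables are created; your iterative loop is just a conservative (and still polynomial) version of that single step.
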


\begin{proof}
Observe that if $(C, k)$ has a satisfying assignment of weight $k$, then none of the variables assigned 1 by such an assignment can be a zero-variable of $C$.

Suppose first that the number of nonzero-variables in $C$ is $n^{o(1)}$, and let ${\cal N}$ be the set of nonzero-variables of $C$. We enumerate each subset $S$ of ${\cal N}$ of size $k$ as a candidate subset of variables that will be assigned 1 by a satisfying assignment of weight $k$ for $C$. For each such candidate subset $S$, we assign the variables in $S$ the value 1 and the remaining variables in $C$ the value 0, and check if the assignment satisfies $C$; if it does, we accept $(C, k)$. If no enumerated subset leads to acceptance, we reject $(C, k)$. The number of the enumerated subsets is ${|{\cal N}| \choose i} = n^{o(1)k}$. By Lemma~\ref{lem:fpttime}, the above algorithm runs in fpt-time.

We may now assume that the number of nonzero-variables in $C$, $n'$, is at least $g(n)=n^{o(1)}$. Let $C'$ be the circuit obtained from $C$ by assigning the zero-variables of $C$ the value 0. Observe that this assignment does not introduce zero-variables, and hence the resulting circuit $C'$ has no zero-variables, and satisfies the statement of the lemma.
\end{proof}

Let $v$ and $v'$ be vertices in $C$. We say that $v$ and $v'$ are {\em equivalent} if $v$ and $v'$ are literals and $v=v'$, or both $v$ and $v'$ are 2-literal gates that are of the same type (either both are {\sc and}-gates or both are {\sc or}-gates) and have the same two literals incoming to them.

We apply the following reduction rule repeatedly until it is not applicable:

\begin{reduction}\rm
\label{red:1}
Let $C$ be $\Pi_t^-$ circuit, and let $g$ be a gate in $C$. Let $v$ be a literal or a 2-literal gate that is incoming to $g$.

\begin{itemize}
\item[(a)] If there exists a vertex $v'\neq v$ that is equivalent to $v$, such that $v'$ is incoming to $g$, then let $C'$ be the circuit resulting from $C$ after removing the edge from $v'$ to $g$.

\item[(b)] If $g$ is an {\sc or}-gate and there exists a gate $g' \neq g$ in the subtree $T_{g}$ of $T_C$ and a vertex $v'$ equivalent to $v$ such that $v'$ is incoming to $g'$, then let $C'$ be the circuit resulting from $C$ after performing the following: if $g'$ is an {\sc and}-gate then remove $g'$, and if $g'$ is an {\sc or}-gate then remove the edge from $v'$ to $g'$.

\item[(c)] If $g$ is an {\sc and}-gate and there exists a gate $g' \neq g$ in $T_{g}$ and a vertex $v'$ equivalent to $v$ such that $v'$ is incoming to $g'$, then let $C'$ be the circuit resulting from $C$ after performing the following: if $g'$ is an {\sc or}-gate then remove $g'$, and if $g'$ is an {\sc and}-gate then remove the edge from $v'$ and $g'$.
\end{itemize}

The circuit $C'$ is a $\Pi_t^-$ circuit that is equivalent to $C$.
\end{reduction}

\begin{proof}
We prove the correctness for the case when $v = \overline{x}_{j}$ is a literal. The proof is very similar for the case when $v$ is a 2-literal gate.

It suffices to show that any truth assignment $\tau$ satisfies $C$ if and only if it satisfies $C'$. Since the only differences between $C$ and $C'$ occur in $T_{g}$ (including the literals connected to the gates in $T_{g}$),
it suffices to show that the value of $g$ induced by $\tau$ in $C$ is the same as that in $C'$. This is clear for part (a), so we prove it for part (b), and the proof for (c) is similar. Note that, by the simplification rules, we can assume that every gate has indegree at least 2.

If $\overline{x}_{j}$ is assigned 1 by $\tau$, then clearly the value of $g$ induced by $\tau$ in both $C$ and $C'$ is 1, and hence is the same. Suppose now that $\overline{x}_{j}$ is assigned 0 by $\tau$. An {\sc and}-gate in $T_{g}$ that $\overline{x}_{j}$ is incoming to evaluates to 0 by $\tau$, and hence its removal from $C$ does not affect the value of $g$ induced by $\tau$; similarly, since $\overline{x}_{j}=0$, the value of an {\sc or}-gate in $T_{g}$ to which $\overline{x}_{j}$ is incoming, is not affected by the removal of the connection from $\overline{x}_{j}$ to this gate, and hence this removal does not affect the value of $g$ induced by $\tau$. It follows that the value of $g$ induced by $\tau$ is the same in both $C$ and $C'$.
\end{proof}

Note that all the simplification rules and the reduction rule do not increase the genus of $C$, nor do they decrease the number of variables/lietrals in $C$. Moreover, these operations can be carried out in time polynomial in the size of the circuit.

\begin{lemma}
\label{lem:occurrences}
Let $C$ be a $\Pi_t^-$ circuit on $n$ variables of genus $g(n) \leq n$ such that $C$ has no zero-variables. In polynomial time we can reduce $C$ to an equivalent $\Pi_t^-$ circuit $C'$ of genus $g(n)$ on the same set of variables such that the number of occurrences of the literals in $C'$ is $O(n)$.
\end{lemma}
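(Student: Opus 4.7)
The plan is to apply the simplification rules of Section~\ref{subsec:circuits} together with Reduction Rule~\ref{red:1} exhaustively. Each application strictly decreases the size of the circuit, preserves equivalence, and does not increase the genus, so the process terminates in polynomial time and produces an equivalent $\Pi_t^-$ circuit $C'$ on the same set of $n$ variables of genus at most $g(n)$. It then suffices to prove that the total number of literal occurrences in $C'$ is $O(n + g(n))$, which yields $O(n)$ since $g(n) \leq n$.

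The first step will be to record two structural invariants that hold in $C'$ once Reduction Rule~\ref{red:1} is no longer applicable. (i) For every gate $g$ of $C'$, the literals incoming to $g$ are pairwise distinct, since otherwise part (a) applied with $v$ a repeated literal would still fire. (ii) For every literal $\overline{x}_i$, the set of gates to which $\overline{x}_i$ is incoming forms an anti-chain in $T_{C'}$: indeed, if two such gates $g, g'$ satisfied $g' \in T_g \setminus \{g\}$, then part (b) (if $g$ is an {\sc or}-gate) or part (c) (if $g$ is an {\sc and}-gate), applied at $g$ with $v = \overline{x}_i$ and $v' = \overline{x}_i$ incoming to $g'$, would still be applicable. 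The same argument applied with $v$ a 2-literal gate yields an analogous anti-chain property on the parents of pairwise-equivalent 2-literal gates, bounding their multiplicities.

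The second step will be to construct the multi-hypergraph $\mathcal{H}$ whose vertex set is the $n$ negative literals of $C'$ and whose hyperedges are indexed by the gates of $C'$ with at least one literal input: for each such gate $g$, the hyperedge $e_g$ is the set of literals incoming to $g$ (a set by invariant (i)). The incidence graph of $\mathcal{H}$ is a subgraph of the underlying undirected graph of $C'$ with the output gate removed, so $\mathcal{H}$ has genus at most $g(n)$, and the total number of literal occurrences in $C'$ equals $\sum_g |e_g|$. I will partition this sum according to $|e_g|$. For $|e_g| \geq 3$, Lemma~\ref{lem:edgebound} bounds the number of such hyperedges by $2n + 4g(n) - 4$, and the bipartite Euler formula applied to the restricted incidence graph (in which gate vertices have degree $\geq 3$) then gives at most $O(n + g(n))$ incidences from these gates. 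For $|e_g| = 2$, Lemma~\ref{lem:Euler} applied to the graph on literals whose edges are these pairs bounds the number of distinct pairs by $3n + 6g(n) - 6$, and the multiplicity of each pair is bounded using the anti-chain property on parents of equivalent 2-literal gates together with the $\Pi_t$ alternation of gate types. For $|e_g| = 1$, an analogous argument using the anti-chain of occurrences of each literal bounds the contribution by $O(n + g(n))$.

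The principal obstacle will be the low-degree count ($|e_g| \in \{1, 2\}$), since Lemma~\ref{lem:edgebound} controls only hyperedges of size $\geq 3$ and distinct gates may share the same literal-input set. This is where the full strength of Reduction Rule~\ref{red:1}---the anti-chain constraints derived in the first step both on the occurrences of a literal and on the parents of equivalent 2-literal gates---is needed to bound the multiplicities. Summing the contributions of the three size classes then yields the desired $O(n + g(n)) = O(n)$ bound on the total number of literal occurrences in $C'$, completing the proof.
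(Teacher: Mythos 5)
Your setup---exhaustive application of the simplification rules and Reduction Rule~\ref{red:1}, the two anti-chain invariants, and the multihypergraph on the literals---is sound, and your treatment of the gates with at least three literal inputs (Lemma~\ref{lem:edgebound} to bound the number of hyperedges, then Euler's formula on the incidence graph to bound the incidences) is correct; it even absorbs some occurrences that the paper must handle separately. The gap is in the two low-degree classes, which are the actual crux of the lemma. For $|e_g|=2$ you propose to control multiplicities via the anti-chain property on parents of equivalent 2-literal gates, but a gate with exactly two literal inputs \emph{and} one or more gate inputs is not a 2-literal gate in the paper's sense, so that part of Reduction Rule~\ref{red:1} never fires on it; and the anti-chain property on the occurrences of a single literal does not bound multiplicities either, since an anti-chain in $T_{C'}$ (for instance a set of siblings) can be arbitrarily large. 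Concretely, $m$ sibling {\sc and}-gates $g_1,\ldots,g_m$, each with literal inputs $\{\overline{x}_1,\overline{x}_2\}$ and one private gate input, survive every reduction you have stated, and they contribute $m$ parallel edges to the pair-graph on the literals---parallel edges being exactly what the Euler bound $3n+6g-6$ does not control. The same difficulty hits $|e_g|=1$: nothing in your invariants bounds the number of gates having a single literal input together with gate inputs.

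In such configurations the count is in fact $O(n)$, but only because the genus bound constrains what hangs \emph{below} these gates, and extracting that constraint is what most of the paper's proof is devoted to. The paper first bounds the number of critical gates, then contracts the subtree $T_g$ hanging below each problematic gate $g$ into a single vertex, uses Reduction Rule~\ref{red:1} to argue that after removing multiple edges the contracted vertex still has degree at least $3$, and only then applies Lemma~\ref{lem:edgebound} to the resulting genus-preserving multihypergraph, with a top-down charging scheme ensuring each occurrence is counted once. Your proposal contains no analogue of this contraction-and-charging mechanism, so the $|e_g|\in\{1,2\}$ cases remain unproved; the tools you name (anti-chains and the alternation of gate types) do not suffice on their own.
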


\begin{proof}
We apply Reduction Rule~\ref{red:1} to $C$ until it is no longer applicable. (We also assume that the simplification rules are applied as discussed before.) Let $C'$ be the resulting circuit. From the above reduction rule, we know that $C'$ is equivalent to $C$, and hence $C'$ has no zero-variables. Since none of these rules remove any variables/literals, $C'$ has the same variables as $C$. Moreover, all operations performed by the reduction and simplification rules either remove edges, gates, or are edge contractions. Therefore, the genus of $C'$ is at most $g(n)$. It remains to show that the number of occurrences of the literals in $C'$ is $O(n)$.

To simplify the counting, we divide the occurrences of the literals in $C'$ into three types: (1) occurrences of literals incoming to a gate $g$ such that $g$ has degree at least 3 and all incoming edges to $g$ are from literals; (2) occurrences of literals incoming to 2-literal gates; and (3) all other occurrences, which are the occurrences of literals incoming to a gate that has at least one gate incoming to it. Next, we upper bound the number of occurrences of each type. Note that since $C'$ has no zero-variables, no literal is incoming to the output gate of $C'$. Let $C'^-$ be $C'$ with the output gate removed.

To bound the number of type-(1) occurrences, we define the multihypergraph ${\cal H}$ whose vertex-set is the set of literals/variables in $C'$. Call a gate $g$ of degree at least 3 whose incoming edges are all from literals, a {\em type-(1) gate}. For each type-(1) gate $g$, we correspond a hyperedge in $H$ that contains the literals that are incoming to $g$. Clearly, the number of occurrences of the literals that are incoming to the type-(1) gates is the same as the total number of occurrences of the vertices of ${\cal H}$ in its hyperedges. Since the genus of $C'^-$ is at most $g(n)$, by the definition of the genus of a hypergraph, the genus of ${\cal H}$ is at most $g(n)$ since its incidence graph is a subgraph of the underlying graph of $C'^-$.  Since each hyperedge in ${\cal H}$ has size at least 3, by Lemma~\ref{lem:edgebound}, the number of hyperedges in the multihypergraph ${\cal H}$ is $O(n + g(n))=O(n)$. Therefore, the incidence graph ${\cal I}$ of ${\cal H}$ has $O(n)$ vertices and genus $g(n)$. By Lemma~\ref{lem:Euler}, the number of edges in ${\cal I}$, which is the same as the total number of vertices in the hyperedges of ${\cal H}$, is $O(n)$. This shows that the number of type-(1) occurrences is $O(n)$.

To upper bound the number of type-(2) occurrences, we upper bound the number of 2-literal gates. First, consider the set ${\cal G}_0$ of 2-literal gates that are incoming to the output gate of $C'$, and ignore all other gates for now. We start by upper bounding the cardinality of ${\cal G}_0$. Since all gates in ${\cal G}_0$ are incoming to the output gate of $C'$, by Reduction Rule~\ref{red:1}, and since all gates in ${\cal G}_0$ are {\sc or}-gates, any pair of literals in $C$ can be incoming to at most one gate in ${\cal G}_0$.  Therefore, we can define a (simple) graph whose vertex-set is the set of literals in $C'$, and whose edges correspond to the gates in ${\cal G}_0$. Clearly, the genus of the constructed graph is $g(n)$. By Lemma~\ref{lem:Euler}, the number of edges in this graph, which is the same as the number of gates in ${\cal G}_0$, is $O(n)$. It follows that the cardinality of ${\cal G}_0$ is $O(n)$, and hence, the number of type-(2) occurrences that are incoming to gates in ${\cal G}_0$ is $O(n)$. Now we upper bound the number of 2-literal gates that are not in ${\cal G}_0$; let ${\cal G}_1$ be the set of these gates. First, we upper bound the number of critical gates in $C'$ that are not in ${\cal G}_0$ by $O(n)$. To do so, observe that each such critical gate $g$ has at least three literals incoming to the gates in $T_g$ (note that there are no gates of indegree 1). By contracting the edges in $T_g$, for each critical gate $g$, and removing any resulting multiple edges, we obtain a vertex that is connected to at least three distinct literals in $C'$; the fact that the resulting vertex is connected to at least three distinct literals follows from the simplification rules and from Reduction Rule~\ref{red:1}, and can be easily verified by the reader. We correspond this resulting vertex with gate $g$. Now by defining a multihypergraph whose vertices are the literals in $C'$, and whose hyperedges correspond to the vertices resulting from the contractions, we can upper bound the number of such vertices, and hence the number of critical gates in $C'$ by $O(n)$, in a similar fashion to that of bounding the type-(1) gates above. (Note that the genus of the defined multihypergraph is at mots $g(n)$ since its incidence graph is a subgraph of a contraction of $C'^-$.) To upper bound the number of gates in ${\cal G}_1$, apply the following operation until it is no longer applicable: For each gate $g$ in ${\cal G}_1$, if $g$ is not incoming to a critical gate, contract the edge between the parent of $g$ in $T_{C'}$ and the grandparent of $g$ in $T_{C'}$. After the application of the aforementioned operation, each gate in ${\cal G}_1$ is incoming to a critical gate, and has exactly two literals incoming to it. Now define a multihypergraph whose
vertex-set consists of the set of literals in $C'$ plus the critical gates, and whose hyperedges contain the vertices that the gates in ${\cal G}_1$ are adjacent to after these contractions; note that each hyperedge in this multihypergraph has size at least 3. Clearly, the defined multihypergraph has genus $g(n)$ since it is a contraction of a subgraph of $C'^-$. Since the number of critical gates in $C'$ is $O(n)$, it follows from Lemma~\ref{lem:edgebound} that the number of gates in ${\cal G}_1$ is $O(n)$. Summing up, the number of 2-literal gates in $C'$ is $O(n)$, and hence the number of type-(2) occurrences is $O(n)$.

Finally, to upper bound the type-(3) occurrences, we again define a multihypergraph ${\cal H}$ of genus $g(n)$ whose vertex-set is the set of literals in $C'$,
and use a charging scheme to charge the type-(3) occurrences to the total number of occurrences of the vertices of ${\cal H}$ in its hyperedges. To ensure that the genus of ${\cal H}$ is $g(n)$, we
rely on the forest ${\cal F}$ in $C'^-$, resulting from $T_{C'}$ after removing the output gate of $C'$, when defining ${\cal H}$. Call a gate a {\em type-(3) gate}  if it has a type-(3) literal incoming to it. We define the {\em level} of a gate to be the distance from it to the output gate of $C'$. We start the charging argument at the type-(3) gates at the highest level of the circuit, and go from the top to the bottom (we assume that the output gate is at the bottom of the circuit). Since $C'$ has no zero-variables, no type-(3) occurrence is incoming to the output gate of $C'$, and hence this charging scheme will stop at the critical gates of $C'$. Consider a type-(3) gate $g$ at the highest level. Since $g$ is not a type-(2) gate and its indegree is more than 1, the number of distinct literals incoming to the subtree $T_g$ in ${\cal F}$ is at least 3. Note that any literal that is incoming to $g$ is not incoming to any other gate in $T_g$ by Reduction Rule~\ref{red:1}. Therefore, by contracting $T_g$ to a single vertex and removing any resulting multiple edges, we get a vertex that is adjacent to all the literals that are incoming to $T_g$, including the type-(3) literals incoming to $g$, and such that the degree of this vertex is at least 3. We associate a hyperedge in ${\cal H}$ with the vertex resulting from this contraction that contains the literals incoming to the resulting vertex. Note that each type-(3) occurrence that is incoming to $g$ corresponds to a literal contained in the created hyperedge. In particular, since each type-(3) literal incoming to $g$ is not incoming to any other gate in $T_g$, no multiple edge that was removed corresponds to any such type-(3) literal, and all type-(3) literals incoming to $g$ are accounted for by (i.e., charged to) the corresponding literals in the defined hyperedge. Consider now a type-(3) gate $g$, and assume inductively, that we finished processing all type-(3) gates above it. We can assume that $g$ has at least one type-(3) gate above it; otherwise, the treatment is similar to that of the base case. If more than one type-(3) gate in $T_{g}$ have been charged in the above scheme, we keep one of them, and remove the edges between each other gate and its parent in $T_g$, thus disconnecting the (contracted) vertex corresponding to the gate from ${\cal F}$; after this process, exactly one type-(3) gate in the resulting $T_g$ was charged earlier in the charging scheme. Again, note that no type-(3) literal that is incoming to $g$ can be incoming to any gate in $T_{g}$. Now we contract the edges in $T_g$ and remove any resulting multiple edges to form a hyperedge of size at least 3 that contains all type-(3) occurrences incoming to $g$ (this can be viewed as if we are adding the type-(3) literals incoming to $g$ to the hyperedge corresponding to the single charged type-(3) gate in $T_{g}$). This charging scheme stops at the critical gates of $C'$. At that point, we have defined a multihypergraph ${\cal H}$ whose genus is $g(n)$ since all the hyperedges in ${\cal H}$ were defined based on contractions of subtrees in ${\cal F}$. The total number of type-(3) occurrences in $C'$ is at most the the total number of occurrences of the vertices of ${\cal H}$ in its hyperedges. Using a similar argument to that used for upper bounding the number of type-(1) occurrences, we conclude that the number of type-(3) occurrences in $C'$ is $O(n)$.

It follows that the total number of occurrences of the literals in $C'$ is $O(n)$. This completes the proof.
\end{proof}

\begin{theorem}
\label{thm:sparse}
Let $C$ be $\Pi_t^-$ circuit with $n$ variables such that $C$ has no zero-variables and the number of occurrences of the literals in $C$ is $O(n)$.  $C$ has a satisfying assignment in which at least $f(n) = \log^{(d^{t})}n$ variables are assigned 1, where $\log^{(i)}$ indicates the logarithm (base 2) applied $i$ times, and $d > 0$ is an integer constant whose value is to be fixed in the proof.
\end{theorem}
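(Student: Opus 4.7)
The plan is to prove the theorem by induction on $t$, with the integer constant $d$ fixed at the end of the argument large enough to absorb the multiplicative constants hidden in the $O(n)$ occurrence bound.

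\emph{Base case $t=2$.} Here $C$ is an antimonotone CNF in which, by the simplification rules, every critical OR-gate (clause) has size at least $2$. A satisfying assignment of weight $k$ picks a set $S$ of $k$ variables to assign $1$ (the rest to $0$); it satisfies $C$ precisely when no clause lies wholly within $S$---equivalently, $S$ is an independent set in the hypergraph $\mathcal{H}$ whose vertices are the variables and whose hyperedges are the clauses. Since the total number of literal occurrences is $O(n)$ and each hyperedge has size $\geq 2$, $\mathcal{H}$ has $O(n)$ hyperedges with total incidence $O(n)$. A probabilistic argument---include each variable in $S$ independently with a small constant probability $p$, then delete one vertex per violated hyperedge (the expected surviving size is $pn - O(p^2 n) = \Omega(n)$ for small constant $p$)---yields an independent set of linear size, which is vastly more than $\log^{(d^2)} n$.

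\emph{Inductive step.} Assume the claim for smaller $t$. The idea is to reduce $C$ to an equivalent or sub-equivalent normalized antimonotone circuit $C'$ of strictly smaller depth, on a subset $V' \subseteq V(C)$ of variables, such that $C'$ inherits the no-zero-variable property and has $O(|V'|)$ occurrences. The top portion of $C$ is processed as follows: (i) each critical OR-gate having at least one literal child is killed for free by keeping that literal's variable fixed at $0$, and since the $O(n)$ occurrence bound limits the total number of such literal children, only a controlled number of variables is sacrificed this way; (ii) each remaining ``hard'' critical OR-gate has only AND-gate children, and committing to one AND-child per hard critical gate (chosen by an averaging argument so as to keep the resulting structure sparse) flattens the top AND--OR--AND triple into a single top AND, reducing the depth. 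A pass of Reduction Rule~\ref{red:1}, the simplification rules, and Proposition~\ref{prop:removezeros} then re-establishes the no-zero-variable and linear-occurrence invariants on $C'$. The reduced instance has $|V'|\geq n'$ variables for an appropriate function $n'$ (a tower of logarithms of $n$), and applying the inductive hypothesis to $C'$ yields a satisfying assignment of weight at least $\log^{(d^{t'})} n'$, which lifts to $C$ (by assigning variables outside $V'$ to $0$) with the same weight. The exponent-bookkeeping is arranged so that $\log^{(d^{t'})} n' \geq \log^{(d^t)} n$, closing the induction.

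\emph{Main obstacle.} The principal difficulty is exactly this inductive bookkeeping: after the commitments and fixings above, one must verify simultaneously that the reduced circuit $C'$ (a) is a valid antimonotone normalized circuit of the required smaller depth, (b) still has no zero-variables (Proposition~\ref{prop:removezeros} can clean up newly created zero-variables, at the cost of some variable count), and (c) retains $O(|V'|)$ literal occurrences. The iterated-logarithm target $\log^{(d^t)} n$ is extremely generous, so even a polynomial or polylog shrinkage of the variable count at each inductive level remains within budget; the $d$-fold jump in the nested-log superscript (from $d^{t-1}$ to $d^t$, or from $d^{t-2}$ to $d^t$ if the natural reduction cuts two levels at a time) provides exactly the slack needed to absorb losses from occurrence counting, Reduction Rule~\ref{red:1}, and zero-variable elimination. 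Fixing $d$ sufficiently large at the end finishes the argument.
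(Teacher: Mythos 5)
Your base case $t=2$ is fine and essentially matches the paper's (the paper reduces to {\sc independent set} on a bounded-degree multigraph after first truncating high-degree variables; your alteration argument gives the same linear-size conclusion). The gap is in the inductive step, and it sits exactly where you flagged ``the principal difficulty'': committing to one {\sc and}-child per hard critical gate does \emph{not} preserve the no-zero-variable invariant, and there is no way to repair this with Proposition~\ref{prop:removezeros}. That proposition is an algorithmic statement for the parameterized problem (it brute-forces when few nonzero-variables remain); it gives no lower bound on how many nonzero-variables survive. After your commitment step, \emph{every} variable of the flattened circuit can become a zero-variable --- e.g.\ a critical gate with two {\sc and}-children on disjoint variable sets, each child having all of its own variables as zero-variables for itself, has no zero-variables as a gate, yet either commitment kills its entire variable set. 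Your step (i) has the same flavor of problem: the number of critical gates with a literal child is only bounded by the $O(n)$ occurrence count, so ``one sacrificed variable per such gate'' can consume all $n$ variables unless you solve a hitting-set/independent-set subproblem that interacts with the rest of the circuit. Saying the iterated-log target is generous does not close this: generosity in the target cannot compensate for a reduced circuit with zero usable variables.

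The paper's proof avoids a global commitment entirely. It first truncates to maximum variable degree $d$, then repeatedly fixes a \emph{single} variable $x_1$, uses the fact that $x_1$ is not a zero-variable to extract, for each critical gate containing $x_1$, a witness child $w^1_i$ not falsified by $x_1=1$, and branches on the size and structure of $\bigcup_i V(w^1_i)$: either this union is small enough that $x_1$ can be set to $1$ outright (the only step that accumulates ones, shrinking $n$ by a factor $d+1$ each time), or some witness is huge, in which case a further case analysis (fraction of nonzero-variables of $w^1_1$, pairwise intersections of the {\sc and}-children of $g_1$, the chain $U_1 \supseteq U_2 \supseteq \cdots$ of common variables) either lets it apply the inductive hypothesis to a \emph{single} depth-$(t-2)$ subcircuit that provably retains many nonzero-variables, or satisfies $g_1$ by zeroing a controlled set. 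The engine that makes the bookkeeping work is a potential argument you do not have: every ``expensive'' branch removes $g_1$ and hence decrements the degree of all surviving variables, so the drastic $n \mapsto \log^{(d^{t-2})}(n/\mathrm{poly}(d))$ shrinkage occurs at most $d$ times, while the cheap shrinkage $n \mapsto n/(d+1)$ occurs at most $f(n)$ times; composing these in the right order yields the $2f(n)$ surplus. Without the per-variable case analysis and the degree-decrement bound, the induction as you have set it up does not go through.
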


\begin{proof}

We will prove the statement of the lemma by induction on $t$.

Since the number of occurrences of the literals in $C$ is $O(n)$, without loss of generality, assume that the degree of every variable (or equivalently literal) in $C$ is at most a constant $d > 0$. If this is not the case, we can assign 0 to all the variables of degree more than $d$ (satisfied gates are then removed), and there are $\Omega(n)$ remaining variables, each of degree at most $d$. (We can redefine $n$ and $d$ if necessary). Therefore, the number of occurrences of the literals in $C$ is at most $dn$.

We say that a gate or a literal, $g$, {\em contains a variable} $v$ if there is a path from the literal $\overline{v}$ to $g$. Denote by $V(g)$ the set of variables contained in $g$ (if $g$ is a literal, then $g=\overline{v}$, and $V(g) = \{v\}$). Let $v$ be a variable contained in a gate $g$. We call $v$ a {\em zero-variable} for $g$ if assigning $v$ the value 1 falsifies $g$; otherwise, $v$ is called a {\em nonzero-variable} for $g$. In particular, a zero-variable (resp. nonzero-variable) for the output gate of $C$ is a zero-variable (resp. nonzero-variable) for $C$, as previously defined.

When $t=2$, every {\sc or}-gate incoming to the output gate of $C$ contains at least two literals. Keep only two literals for each such {\sc or}-gate, and remove the edges from the other literals to the {\sc or}-gate (without removing the literals from $C$). The problem becomes the {\sc independent set} problem on multigraphs of degree bounded by $d$, which can be easily seen to have a solution of size $\Omega(n)$. By assigning 1 to the variables in the independent set and 0 to the remaining variables, the circuit is satisfied. The statement follows.

For simplicity of the presentation and to avoid repetition, the proof of the other base case when $t=3$ (we induct on $t-2$) will be combined with the proof of the inductive step, with the understanding that when $t=3$ the inductive hypothesis does not apply, as explained later in the proof. Assume in what follows that $t \geq 3$, and that the statement is true for any circuit of depth smaller than $t$ that satisfies the statement of the lemma.

First, observe that in the case when $d=1$, $C$ has a satisfying assignment in which at least $n/2$ variables are assigned 1. This can be seen as follows. Let $g_1, g_2, \ldots, g_r$ be the vertices incoming to the output gate
of $C$. Since $C$ has no zero-variables, each $g_i$, for $1\leq i\leq r$, must be an {\sc or}-gate having at least two vertices incoming to it; we use a vertex here to denote a gate or a literal. From each $g_i$, pick a vertex $v_i$ incoming to it that contains at most half of the variables contained in $g_i$; this can be done since every literal in $C$ occurs exactly once. By assigning all variables in $v_i$, for $i=1, \ldots, r$, the value 0, and all the remaining variables in $C$ the value 1, we obtain an assignment that satisfies $C$, and in which at least half of the variables are assigned 1.

Suppose now that $d \geq 2$. Consider the following procedure:

Fix a variable in $C$; without loss of generality, let it be $x_1$ and let $g_1, g_2, \ldots, g_l$, where $l\leq d$, be the {\sc or}-gates incoming to the output {\sc and}-gate of $C$ that contain $x_1$. For an arbitrary $g_i$, $1 \leq i \leq l$, if assigning $x_1$ the value 1 falsifies $g_i$ then $x_1$ would be a zero-variable for the circuit, which is not possible. Therefore, there must exist an {\sc and}-gate or a literal, denoted $w^1_i$, incoming to $g_i$ that is not falsified by assigning $x_1$ the value 1. Let $U$ be the set of variables consisting of $x_1$ plus all the variables contained in $w^1_1, w^1_2, \ldots, w^1_l$. Consider the following cases: \\

\noindent{\bf Case 1.} \  If $|U| \leq nd/(d+1)$, then assign $x_1$ the value 1, and the other variables in $U$ the value 0. Every $w^1_i$, and every hence $g_i$, for $i=1, \ldots, l$, is satisfied by this assignment. Afterwards, every $g_i$ can be removed, and the resulting circuit has at least $n-nd/(d+1) = n/(d+1)$ variables left.

\noindent{\bf Case 2.} \ If $|U| > nd/(d+1)$, then one of $w^1_1, w^1_2, \ldots, w^1_l$ contains at least $\frac{nd/(d+1)}{l} \geq \frac{nd/(d+1)}{d} = n/(d+1)$ variables in $U$; without loss of generality, let $w^1_1$ be such a one. We further distinguish the following subcases:

\begin{itemize}
\item[2.1.] If at most half of the variables of $w^1_1$ are zero-variables of $w^1_1$ (note that this case does not apply when $t=3$, because when $t=3$ all variables contained in $w^1_1$ are zero-variables of $w^1_1$), then assign the zero-variables of $w^1_1$ the value 0. Afterwards, $w^1_1$ is a $(t-2)$-level circuit of at least $n/(2d+2)$ nonzero-variables. Applying the inductive hypothesis to $w^1_1$, we know that $w^1_1$ has a satisfying assignment with at least $\log^{(d^{t-2})}(\frac{n}{2d+2})$ variables assigned 1. This means that in the antimonotone circuit, if we assign 0 to all but these variables, $w^1_1$ is satisfied and so is $g_1$, which can then be removed. Now the resulting circuit $C$ has at least $\log^{(d^{t-2})}(\frac{n}{2d+2})$ variables left, whose degree is at most $d-1$ because they are all incoming to gates in $T_{g_1}$, which is removed.

\item[2.2.]  If any of the {\sc and}-gates or literals incoming to $g_1$, say $w^2_1$, shares fewer than $n/(2d+2)$ variables with $w^1_1$, then $|V(w^1_1) \setminus V(w^2_1)| \geq n/(2d+2)$. Assigning 0 to all variables in $V(w^2_1)$ will satisfy $w^2_1$ and hence will satisfy $g_1$, which can then removed. So the circuit $C$ will have at least $n/(2d+2)$ variables (in $V(w^1_1) \setminus V(w^2_1)$), whose degree is at most $d-1$ (because $g_1$ is satisfied and removed).

\item[2.3.] Now assume that each {\sc and}-gate incoming to $g_1$ shares at least $n/(2d+2)$ variables with $w^1_1$, and hence each contains at least $n/(2d+2)$ variables. Since the total number of occurrences of the literals in $C$ is at most $dn$, there are at most $\frac{dn}{n/(2d+2)} = 2d(d+1)$ {\sc and}-gates incoming to $g_1$. Let $\gamma$ be the number of variables such that each is a nonzero-variable for at least one {\sc and}-gate incoming to $g_1$. We distinguish two subcases:

\begin{itemize}
\item[2.3.1.] If $\gamma \geq n/(2d+2)$, then there exists an {\sc and}-gate incoming to $g_1$, denoted by $w'$, that contains at least $\frac{n/(2d+2)}{2d(d+1)} = \frac{n}{4d(d+1)^2}$ nonzero-variables. (Note that this case does not apply when $t=3$, when every variable is a zero-variable for every {\sc and}-gate incoming to $g_1$ that the variable is contained in.) By a similar argument to that made in 2.1, we apply the inductive hypothesis to $w'$. Afterwards, the circuit $C$ has at least $\log^{(d^{t-2})}(\frac{n}{4d(d+1)^2})$ variables, whose degree is at most $d-1$.

\item[2.3.2.] \ If $\gamma < n/(2d+2)$, assign 0 to every nonzero-variable contained in a gate that is incoming to $g_1$. The remaining variables of $g_1$ are zero-variables of the {\sc and}-gates (or literals) incoming to $g_1$. In other words, what results of $g_1$ is an {\sc or}-gate of the form: $w^1_1 \vee w^2_1 \vee \ldots \vee w^s_1$, where $s\leq 2d(d+1)$ and each $w^j_1$ is a literal or an {\sc and}-gate whose incoming edges are all from literals. Note that there are at most $2d(d+1)$ {\sc and}-gates (or literals) in $g_1$ and $w^1_1$ has at least $n/(2d+2)$ variables left. Denote by $U_j$ be the set of variables shared by all $w^1_1, \ldots, w^j_1$: $$U_j = V(w^1_1) \cap \ldots \cap V(w^j_1).$$ Consider the following process:

    If $|U_2| \leq |U_1|/2$, then $|U_1\setminus U_2| \geq |U_1|/2 = |V(w^1_1)|/2\geq n/(4d+4)$. Assign 0 to all variables except those in $U_1\setminus U_2$, we have a circuit of at least $n/(4d+4)$ variables, whose degree is at most $d-1$ because $g_1$ is satisfied and removed. If $|U_2| \geq |U_1|/2$, then proceed similarly: if $|U_3| \leq |U_2|/2$, then assign 0 to all variables except those in $U_2 \setminus U_3$, we have a circuit of at least $|U_2|/2$ variables, whose degree is at most $d-1$ because $g_1$ is satisfied and removed. Proceed in this fashion, so we either have a circuit of at least $\frac{n/(d+1)}{2^s} \geq \frac{n/(d+1)}{2^{2d(d+1)}}$ variables whose degree is at most $d-1$, or we end up with $|U_s| > \frac{n/(d+1)}{2^s} \geq \frac{n/(d+1)}{2^{2d(d+1)}} >0$, which is impossible because any variable in $U_s$ is a zero-variable of $C$. \\
\end{itemize}
\end{itemize}

This completes the description of the procedure.

Note that no zero-variables are created in any of the above cases. This is true because in all cases except {\bf Case 1}, we assign the variables in $C$ only the value 0, which does not create zero-variables, while in {\bf Case 1}, $x_1$ is assigned 1, but every gate containing $x_1$ is removed (except the output gate). Note also that the second base case of $t=3$ can be treated by the above process because $t=3$ is only possible in {\bf Case/Subcase} 1, 2.2, and 2.3.2, all of which do not rely on the inductive hypothesis.

So in one iteration of the above process, we either: (1) reduce the number of variables from $n$ to  $n/(d+1)$ and assign 1 to a variable ({\bf Case-1} operation), or (2) reduce the number of  variables from $n$ to a number of variables that is at least $\min\{\log^{(d^{t-2})}(\frac{n}{2d+2}), n/(2d+2), \log^{(d^{t-2})}(\frac{n}{4d(d+1)^2}), \frac{n/(d+1)}{2^{2d(d+1)}}\} = \log^{(d^{t-2})}(\frac{n}{4d(d+1)^2})$, and reduce the degree of the variables by 1 ({\bf Case-2} operation). Afterwards, we can repeat the process until $f(n)$ variables are assigned 1, or until the degree of the variables in the circuit is at most 1. After a number of iterations, if $f(n)$ variables are already assigned 1 and the circuit is not empty, then we can assign 0 to all other variables and we are done. If the degree of the variables in the circuit is at most 1, then as we mentioned at the beginning of the proof, at least half of the remaining variables can be assigned 1. So it remains to be shown that when the degree of the variables in the circuit is at most 1, there are at least $2f(n)$ variables left.

In any sequence of iterations, {\bf Case-1} operation is applied at most $f(n)$ times and {\bf Case-2} operation is applied at most $d$ times. Let $g(n) = n/(d+1)$ and $h(n) = \log^{(d^{t-2})}\left(\frac{n}{4d(d+1)^2}\right)$. Note that $g(h(n)) \leq h(g(n))$, i.e., $g\circ h \leq h \circ g$. So the number of variables in the circuit after any sequence of iterations is at least: $$\underbrace{g \circ \ldots \circ g}_{f(n)} \circ \underbrace{h \circ \ldots \circ h}_{d}(n).$$

Note that $h(n) = \log^{(d^{t-2})}\left(\frac{n}{4d(d+1)^2}\right) \geq \log^{(d^{t-2})}\log n = \log^{(d^{t-2}+1)} n$. So we have:
\begin{eqnarray}
\underbrace{h \circ \ldots \circ h}_{d}(n) \geq \underbrace{\log^{(d^{t-2}+1)} \circ \ldots \circ \log^{(d^{t-2}+1)}}_{d} n = \log^{(d(d^{t-2}+1))} n = \log^{(d^{t-1}+d)} n.\label{hs}
\end{eqnarray}
On the other hand:
\begin{eqnarray}
\underbrace{g \circ \ldots \circ g}_{f(n)}(n) = n/(d+1)^{f(n)} = n/(d+1)^{\log^{(d^{t})}n} \geq n/\log^{(d^{t}-2)}n > \log n.\label{gs}
\end{eqnarray}

Finally, since $d\geq 2$ and $t\geq 3$, we have:
\begin{eqnarray}\underbrace{g \circ \ldots \circ g}_{f(n)} \circ \underbrace{h \circ \ldots \circ h}_{d}(n) \geq \log(\log^{(d^{t-1}+d)} n) = \log^{(d^{t-1}+d+1)} n \geq 2\log^{(d^t)} n = 2f(n).\end{eqnarray}

This means that in any sequence of iterations, we either assign 1 to $f(n)$ variables or end up with at least $2f(n)$ variables of degree at most 1, in which case the circuit can be satisfied by assigning 1 to $f(n)$ variables. So in either case, the statement is true for circuits of depth $t\geq 2$.

This completes the proof.
\end{proof}

\begin{theorem}
\label{thm:main1}
The {\sc wsat$^-[t]$} ($t \geq 2)$ problem on circuits of genus $g(n) =n^{o(1)}$ ($n$ is the number of variables in the circuit) is $FPT$, and is $W[t]$-complete for odd $t$ and $W[t-1]$-complete for even $t$ if $g(n) = n^{O(1)}$.
\end{theorem}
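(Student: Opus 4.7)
The hardness half of the statement (genus $g(n) = n^{O(1)}$) is already essentially in hand: it is immediate from Theorem~\ref{lem:hardnessantimonotone}, since any polynomial genus bound $n^{O(1)}$ is $\Omega(n^c)$ for some constant $c > 0$. So the real work is to establish fixed-parameter tractability for genus $n^{o(1)}$.

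The plan for the FPT direction is to chain together the three main results of this section. Given an instance $(C, k)$ of {\sc wsat$^-[t]$} with genus $g(n) = n^{o(1)}$, I would first invoke Proposition~\ref{prop:removezeros} to, in fpt-time, either decide $(C, k)$ outright or reduce it to an equivalent instance $(C', k)$ having no zero-variables, genus still at most $g(n)$, and $n'$ variables satisfying $g(n) \leq n' \leq n$. The inequality $n' \geq g(n)$ is precisely what I need to apply Lemma~\ref{lem:occurrences} next: since the genus bound $g(n)$ is at most $n'$, the lemma transforms $C'$ in polynomial time into an equivalent $\Pi_t^-$ circuit $C''$ on the same $n'$ variables, still free of zero-variables, with only $O(n')$ literal occurrences. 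Now Theorem~\ref{thm:sparse} applies to $C''$ and guarantees a satisfying assignment of weight at least $f(n') = \log^{(d^t)} n'$ for the constant $d$ from that theorem.

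The crucial observation that converts this existence result into an algorithm is the antimonotone monotonicity: in a $\Pi_t^-$ circuit every literal is negative, so flipping a variable's value from $1$ to $0$ switches one of its literals from $0$ to $1$, which can only help each {\sc and}- and {\sc or}-gate remain satisfied. Hence any satisfying assignment of weight $w \geq k$ can be truncated to a satisfying assignment of weight exactly $k$ by flipping $w - k$ arbitrary $1$'s to $0$'s. Thus if $f(n') \geq k$, the algorithm can safely accept. Otherwise $n' < f^{-1}(k)$, which is a function of $k$ (and $t$) alone, and we simply brute-force enumerate all $\binom{n'}{k}$ candidate weight-$k$ assignments; this takes time bounded by a function of $k$, so the overall procedure runs in fpt-time.

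The only mildly subtle points are (i) making sure the hypotheses of Lemma~\ref{lem:occurrences} and Theorem~\ref{thm:sparse} continue to hold after each reduction, in particular that the reduction of Lemma~\ref{lem:occurrences} neither reintroduces zero-variables nor inflates the variable count, and (ii) articulating the antimonotonicity flip argument carefully, since the decision problem asks for weight \emph{exactly} $k$ rather than at least $k$. Both are routine, so I do not expect any real obstacle; the core content of the FPT proof is the combinatorial work already done in Proposition~\ref{prop:removezeros}, Lemma~\ref{lem:occurrences}, and Theorem~\ref{thm:sparse}, and the present theorem is essentially their composition together with the hardness reduction of Theorem~\ref{lem:hardnessantimonotone}.
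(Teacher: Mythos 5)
Your proposal is correct and follows essentially the same route as the paper's own proof: hardness from Theorem~\ref{lem:hardnessantimonotone}, and for the FPT direction the same chain Proposition~\ref{prop:removezeros} $\rightarrow$ Lemma~\ref{lem:occurrences} $\rightarrow$ Theorem~\ref{thm:sparse}, followed by accepting when $k \leq f(n')$ and brute-forcing the $\binom{n'}{k}$ assignments when $n' < f^{-1}(k)$. The only difference is that you make explicit the antimonotone ``flip $1$'s to $0$'s'' truncation argument justifying acceptance for weight exactly $k$, which the paper leaves implicit.
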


\begin{proof}
Let $g(n) = n^{o(1)}= n^{1/\mu(n)}$, where $\mu(n)$ is a complexity function, and let $(C, k)$ be an instance of the {\sc wsat$^-[t]$} ($t \geq 2)$ problem on circuits of genus $g(n)$. By Proposition~\ref{prop:removezeros}, we can assume that $C$ has no zero-variables, and that the number of variables $n$ in $C$ is least $g(n)$. By Lemma~\ref{lem:occurrences}, we may assume that the number of occurrences of the literals in $C$ is $O(n)$; if this is not the case then the genus of the circuit is not upper bounded by $g(n)$, and we reject the instance. By Theorem~\ref{thm:sparse}, $C$ has a satisfying assignment in which at least $f(n)$ variables are assigned the value 1, where $f(n)$ is the function given in the lemma. Therefore, if $k \leq f(n)$ then we accept the instance $(C, k)$; otherwise, $k > f(n)$ and in fpt-time we can decide the instance by a brute-force algorithm that enumerates every weight-$k$ assignment.

The hardness results for $g(n) = n^{O(1)}$ follow from Theorem~\ref{lem:hardnessantimonotone}.
\end{proof}

\section{The monotone case}
\label{sec:monotone}

In this section, we give a complete characterization of the parameterized complexity of the {\sc wsat$^+$} problem (i.e., {\sc wsat$^+[2]$}) with respect to the genus of the circuit, and a partial characterization of the parameterized complexity of
{\sc wsat$^+[t]$} ($t \geq 2)$. We start with the following hardness result:

\begin{theorem}\label{lem:hardnessmonotone}
Let $c > 0$ be a constant. The {\sc wsat$^+[t]$} ($t \geq 2)$ problem on circuits of genus $g(n) =\Omega(n^c)$, where $n$ is the number of variables in the circuit, is $W[t]$-complete for even $t$ and $W[t-1]$-complete for odd $t$.
\end{theorem}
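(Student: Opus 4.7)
The plan is to mimic the proof of Theorem~\ref{lem:hardnessantimonotone} by constructing an fpt-reduction from (general) {\sc wsat$^+[t]$} to {\sc wsat$^+[t]$} on circuits of genus $g(n)=\Omega(n^c)$, and combining this with the fact that {\sc wsat$^+[t]$} lies in $W[t]$ for even $t$ and in $W[t-1]$ for odd $t$. The main obstacle is that the padding trick used in the antimonotone proof---hanging new negative literals off the output {\sc and}-gate, which are thereby forced to $1$ and contribute $0$ to the weight---fails in the monotone setting: hanging new positive literals off the output {\sc and}-gate would force those variables to $1$ and increase the weight by a quantity depending on $n_0$, which is not a function of $k$ alone.

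To circumvent this, given an instance $(C_0,k)$ of {\sc wsat$^+[t]$} with $n_0$ variables and $m_0$ gates (so $C_0$ has genus at most $m_0^2$), I would first dispose of the trivial case $k>n_0$ by outputting a canonical \NO\ instance, and otherwise set $m'=\max\{1,\lceil(m_0^2/c')^{1/c}\rceil-n_0\}$, where $c'>0$ is the constant witnessing $g(n)\ge c'n^c$. I would build $C$ by introducing $m'$ fresh positive literals $y_1,\ldots,y_{m'}$ and a new {\sc or}-gate $G$ whose inputs are exactly $y_1,\ldots,y_{m'}$, wired as an additional child of the output {\sc and}-gate of $C_0$. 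The reduction outputs $(C,k+1)$. Since the added structure is a planar star that hangs off the output gate (which is deleted when computing the genus of the circuit), the genus of $C$ equals that of $C_0$ and is thus at most $m_0^2\le c'n^c$, with $n=n_0+m'$; moreover, $C$ is still a $\Pi_t^+$ circuit because $G$ sits at level~$1$ and $t\ge 2$.

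For correctness, a weight-$k$ satisfying assignment of $C_0$ extends to a weight-$(k+1)$ satisfying assignment of $C$ by putting $y_1=1$ and $y_i=0$ for $i\ge 2$, which satisfies $G$. Conversely, any weight-$(k+1)$ satisfying assignment $\sigma$ of $C$ must assign $1$ to at least one $y_i$ in order to satisfy $G$, so its restriction to the variables of $C_0$ has weight $k_1\le k$ and already satisfies $C_0$; then by monotonicity of $C_0$ I can flip $k-k_1$ additional $0$s to $1$s among the original variables to obtain a weight-$k$ satisfying assignment of $C_0$ (this is where the reduction uses $k\le n_0$). Hence $(C_0,k)$ is a \YES\ instance if and only if $(C,k+1)$ is, and $k\mapsto k+1$ yields a valid fpt-reduction. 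Combined with the membership of {\sc wsat$^+[t]$} in $W[t]$ for even $t$ and in $W[t-1]$ for odd $t$, this will establish the claimed completeness.
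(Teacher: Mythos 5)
Your proposal is correct and follows essentially the same route as the paper: pad the instance with new positive literals feeding a fresh {\sc or}-gate attached to the output {\sc and}-gate, and increase the parameter to $k+1$. You are in fact somewhat more careful than the paper's own argument, which dismisses the equivalence with ``it is easy to see'' and does not spell out the $k>n_0$ case or the monotone padding of a weight-$k_1<k$ assignment back up to weight $k$.
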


\begin{proof}
To prove the hardness result in the theorem, we show that {\sc wsat$^+[t]$} is fpt-reducible to {\sc wsat$^+[t]$} on circuits of genus $g(n) =\Omega(n^c)$. Since {\sc wsat$^+[t]$} is $W[t]$-hard for even $t$, and $W[t-1]$-hard for odd $t > 1$~\cite{fptbook}, the hardness result follows. Suppose that $g(n) = c'n^c$, for some constant $c' > 0$.

Let $(C_0, k)$ be an instance of {\sc wsat$^+[t]$}. Suppose that $C_0$ has $n_0$ variables and $m_0$ gates (including the variables). Therefore, the genus of $C_0$ is at most $m_0^2$. If $m_0^2 \leq c'n_0^c$, then the fpt-reduction outputs the instance $(C, k)$, where $C= C_0$. If $m_0^2 > c'n_0^c$, let $C$ be the circuit obtained from $C_0$ by adding $\lceil(m_0^2/c')^{(1/c)}\rceil - n_0$ variables incoming to an {\sc or}-gate that is incoming to the output {\sc and}-gate of $C_0$. The fpt-reduction outputs the instance $(C, k+1)$. Obviously, the genus of $C$ is at most that of $C_0$, which is at most $m_0^2$. It can be easily verified that the genus of $C$, in both cases, is at most $c'n^c$, where $n$ is the number of variables in $C$. It is easy to see that $C_0$ has a weight-$k$ satisfying assignment if and only if $C$ has a weight-$(k+1)$ satisfying assignment. It follows that the above reduction is an fpt-reduction from {\sc wsat$^+[t]$} to {\sc wsat$^+[t]$} on circuits of genus $g(n) =\Omega(n^c)$.

The completeness of the problem follows from the membership of {\sc wsat$^+[t]$} in $W[t]$ for even $t$, and the membership of {\sc wsat$^+[t]$} in $W[t-1]$ for odd $t > 1$.
\end{proof}

\begin{theorem}
\label{thm:general}
Let $(C, k)$ be an instance of {\sc wsat$^+[t]$} ($t \geq 2$) such that $C$ has genus $g(n) = n^{o(1)}$, where $n$ is the number of variables in $C$. There is an fpt-time algorithm that either decides the instance $(C, k)$ correctly, or reduces it to $h(k)n^{O(1)}$ many instances $(C', k')$ of {\sc wsat$^+[t]$}, where $h$ is a complexity function of $k$ and $k' \leq k$, such that $(C, k)$ is a yes-instance if and only if at least one of the instances $(C', k')$ is, and such that each instance $(C', k')$ satisfies that: (1) the number of critical gates in $C'$ is at most $2k'$, (2) every variable in $C'$ is incoming to gates in at most two subtrees $T_p, T_q$ of $T_C'$ rooted at critical gates $p, q$ in $C'$, and (3) the genus of $C'$ is at most $g(n)$.
\end{theorem}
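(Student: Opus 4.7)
The plan is to design a bounded-depth branching algorithm that, given $(C, k)$, either solves it or outputs at most $h(k)n^{O(1)}$ reduced instances meeting conditions~(1)--(3). First, I apply the standard simplification rules so that every non-output gate has indegree at least $2$ and no two same-type gates are adjacent. Let $p_1, \ldots, p_r$ denote the critical gates of $C$, and let $B$ be the bipartite graph with parts $V(C)$ (the variables) and $\{p_1, \ldots, p_r\}$, where $\{x, p_i\}$ is an edge whenever some occurrence of $x$ is incoming to a gate in $T_{p_i}$. Since $B$ is obtained from the underlying undirected graph of $C$ minus the output gate by contracting each subtree $T_{p_i}$ to a single vertex, $B$ has genus at most $g(n)$, and condition~(3) will be maintained throughout, since every operation I perform (edge deletion, vertex deletion, edge contraction, and the simplification rules) never increases the genus.

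The key observation is that any weight-$k$ satisfying assignment gives a set $W \subseteq V(C)$ of size $k$ that hits every critical subtree, i.e., $W \cap N_B(p_i) \neq \emptyset$ for all $i$. If $r > 2k$, then by pigeonhole some $w \in W$ must satisfy $\deg_B(w) \geq 3$. This motivates my first branching rule: enumerate $V_{\geq 3} := \{x \in V(C) : \deg_B(x) \geq 3\}$, and for each $x \in V_{\geq 3}$ recurse on the instance obtained by setting $x = 1$, removing all critical subtrees containing $x$, and decreasing $k$ by one. Once $r \leq 2k$, I enforce condition~(2) via a second rule: if some $x$ still lies in $\geq 3$ critical subtrees, branch on $x = 1$ (recurse with $k-1$ and the corresponding removals) versus $x = 0$ (propagate via the simplifications, deleting the incidences of $x$ with OR-gates and turning AND-gates containing $x$ into the constant $0$, which then cascades through the tree). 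When neither branching rule applies, the remaining instance satisfies~(1) and~(2).

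The main obstacle is bounding the total number of produced instances by $h(k) n^{O(1)}$. The first rule decreases $k$ in every branch so its depth is at most $k$, but its branching factor $|V_{\geq 3}|$ can be as large as $\Theta(n)$ under the naive bound from Lemma~\ref{lem:Euler}. To compress this, I plan to invoke Lemmas~\ref{lem:edgebound} and~\ref{lem:im} on the auxiliary multihypergraph whose vertex set is $\{p_1, \ldots, p_r\}$ and whose hyperedges record, for each variable $x \in V_{\geq 3}$, the set of critical gates whose subtrees contain $x$; this hypergraph inherits genus $g(n)$ from $B$. Combined with the assumption $g(n) = n^{o(1)}$ and a dominance-style reduction rule that deletes $x$ whenever its $B$-neighborhood is contained in that of another high-degree variable, I expect to bound the number of \emph{essential} branchings per node by a function of $k$ alone, so that the first rule contributes $h(k)^k n^{O(1)}$ leaves. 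For the second rule, the $x = 0$ branch does not decrease $k$, so I will charge its invocations against a potential such as the number of literal occurrences, which drops by a constant at each application and contributes only a polynomial factor.

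The hardest part of the argument will be the charging analysis for the first rule: one needs a structural lemma showing that a set of $h(k)$ variables in $V_{\geq 3}$ dominates the full set of essential high-degree choices, so that enumerating over them loses no yes-instance. This is where the sub-polynomial genus assumption $g(n) = n^{o(1)}$ enters crucially via the Euler-type inequalities, since the analogous statement fails for polynomial genus (cf.\ Theorem~\ref{lem:hardnessmonotone}).
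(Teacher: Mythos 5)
Your setup matches the paper's: the contracted bipartite graph $B$ between variables and critical subtrees, the genus bound on $B$, the pigeonhole observation that a weight-$k$ solution must use a variable of $B$-degree at least $3$ when there are more than $2k$ critical gates, and the appeal to Lemma~\ref{lem:edgebound} to control $|V_{\geq 3}|$. But the heart of the theorem --- getting the branching down to $h(k)n^{O(1)}$ instances --- is exactly the part you defer to an unproved ``structural lemma showing that a set of $h(k)$ variables in $V_{\geq 3}$ dominates the full set of essential high-degree choices,'' and that lemma is not the right mechanism. First, domination via containment of $B$-neighborhoods is not sound for $t>2$: $B$ only records which critical subtrees a variable \emph{touches}, and for depth greater than $2$ a variable incoming to some gate deep inside $T_p$ need not satisfy $p$ when set to $1$, so replacing $x$ by a variable with a larger $B$-neighborhood can destroy satisfiability. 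Second, even for $t=2$ there is no reason a set of $h(k)$ variables dominates all essential choices; the incomparable high-degree variables in $V_{\geq 3}$ can number $\Theta(n_g)$, and Lemma~\ref{lem:edgebound} only bounds $|V_{\geq 3}|$ by $O(n_g + g(n))$, which is useless when $n_g$ is large.

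What actually closes the gap in the paper is a case split on $n_g$, the number of critical gates, against the threshold $n^{1/\mu(n)}$. If $n_g \leq n^{1/\mu(n)}$, then $|V_{\geq 3}| \leq 2n_g + 4g(n) = n^{o(1)}$, and one enumerates \emph{all} subsets of $V_{\geq 3}$ of size at most $k$ in one shot ($k \cdot n^{o(1)\cdot k}$ of them, fpt by Lemma~\ref{lem:fpttime}); assigning the complement of the chosen subset the value $0$ removes all of $V_{\geq 3}$, so conditions (1) and (2) hold automatically for the surviving variables, which all have $B$-degree at most $2$. If $n_g > n^{1/\mu(n)}$, Euler's bound gives at most $27n_g$ edges in the induced subgraph on ${\cal G} \cup V_{\geq 3}$, hence at most $\log n$ variables of degree at least $27n_g/\log n$; the key point is not that these dominate the others, but that any weight-$k$ solution \emph{must contain one of them}, since $k$ variables of smaller degree can cover at most $27kn_g/\log n + 2k < n_g$ critical gates. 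This yields branching factor $\log n$ with $k$ decreasing on every branch, i.e., $(\log n)^k$ instances, again fpt by Lemma~\ref{lem:fpttime}. Your second rule has the further accounting problem that the $x=0$ branch does not decrease $k$, and charging it to the number of literal occurrences gives a tree with roughly $n^{O(k)}$ leaves rather than a polynomial factor; the paper avoids this entirely by never branching binarily on a single high-degree variable. Without the $n_g$ threshold argument (or an equivalent), your proof does not go through.
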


\begin{proof}
Let $g(n)$ be a complexity function such that $g(n)=n^{o(1)}$. Since $g(n) = n^{o(1)}$, $g(n)\leq n^{1/\mu(n)}$ for some complexity function $\mu(n)$.

Let $(C, k)$ be an instance of {\sc wsat$^+[t]$}, where $C$ is a $\Pi_t^+$ circuit with set of variables $X=\{x_1, \ldots, x_n\}$, and $k$ is the parameter. If more than $k$ variables are incoming to the output gate of $C$, then clearly $C$ has no satisfying assignment of weight $k$, and we reject the instance $(C, k)$. Otherwise, we can assign the value 1 to the variables incoming to the output-gate of $C$, remove these variables, and update $C$ and $k$ accordingly. So we may assume, without loss of generality, that $C$ has no variables incoming to its output gates, and that all gates incoming to the output gates are {\sc or}-gates (by the simplification rules discussed in Section~\ref{sec:prelim}), and hence are critical gates.

For each critical gate $p$ in $C$, consider the subtree $T_p$ of $T_C$. In the case when $t=2$, this subtree is trivial, and consists of gate $p$. We form an auxiliary graph ${\cal B}$ as follows. Starting at each critical gate $p$, we contract the edges in $T_p$ to form a single vertex $p'$ whose incoming variables are the variables that are incoming to at least one gate in $T_p$. Note that if a variable is incoming to several gates in $T_p$, then there will be multiple edges between $p'$ and this variable. Let ${\cal G}$ be the set of vertices resulting from contracting each tree $T_p$ corresponding to a critical gate $p$ in $C$. Let ${\cal B} = ({\cal G}, X)$ be the underlying undirected bipartite graph resulting from this contraction with the multiple edges removed. That is, there is an (undirected) edge in ${\cal B}$ between a variable $x_i \in X$ and a gate $p'$ in ${\cal G}$ if and only if $x_i$ is incoming to some gate in $T_p$. Clearly, the genus of ${\cal B}$ is at most $g(n)$. Observe that since each critical gate $p$ must be satisfied by every assignment that satisfies $C$, for any vertex $p'$ in ${\cal G}$, at least one variable incident to $p'$ in ${\cal B}$ must be assigned 1 in any truth assignment satisfying $C$. Pose $n_g = |{\cal G}|$.

We partition the variables in $X$ into two sets: $X_{\geq 3}$ that consists of each variable in $X$ whose degree in $B$ is at least 3, and $X_{\leq 2}$ consisting of each variable in $X$ whose degree in $B$ is at most 2. Pose $n_3 = |X_{\geq 3}|$ and $n_2 = |X_{\leq 2}|$. By defining a multihypergraph whose vertex-set is ${\cal G}$, and whose hyperedges correspond to the neighborhoods of the variables in $X_{\geq 3}$, we obtain from Lemma~\ref{lem:edgebound} that $n_3 \leq 2n_g + 4g(n)$; if the preceding upper bound on $n_3$ does not hold, then we reject the instance (this means that the genus of the circuit is not at most $g(n)$). We perform the following search-tree algorithm ${\cal A}$ that distinguishes two cases:  \\

\noindent{\bf Case 1.} $n_g \leq n^{1/\mu(n)}$. In this case we have $n_3 \leq  2n_g + 4g(n) \leq 6 n^{1/\mu(n)}$. The number of subsets of $X_{\geq 3}$ of size at most $k$ is at most $\Sigma_{i=0}^{k} {n_3 \choose i} \leq kn_3^{k} \leq k\cdot(6n^{1/\mu(n)})^k$. We try each such subset of $X_{\geq 3}$ as a candidate subset of variables that will be assigned value 1 by a satisfying assignment of weight $k$. For each such candidate subset $S$, we update the gates in $C$ in a natural way according to the partial assignment assigning the variables in $S$ the value 1, and those in $X_{\geq 3}\setminus S$ the value 0. We remove all variables in $X_{\geq 3}$ from $C$, and update $C$ and $k$ appropriately. Since each remaining variable is in $X_{\leq 2}$, each variable can satisfy at most 2 critical gates, and hence if the number of critical gates in $C$ is more than $2k$, then we can reject the resulting instance $(C, k)$. Therefore, for each instance resulting from the enumeration of such a subset $S$ of $X_{\geq 3}$, either the number of remaining critical gates in $C$ is more than $2k$ and we reject the instance since $k$ variables in $X_{\leq 2}$ cannot satisfy all the critical gates of $C$, or the number of critical gates in $C$ is at most $2k$. Since the number of enumerated candidate subsets of $X_{\geq 3}$ is at most $k\cdot (6n^{1/\mu(n)})^k$, the statement of the theorem follows from Lemma~\ref{lem:fpttime}.

\noindent {\bf Case 2.} $n_g > n^{1/\mu(n)}$. Let $G$ be the subgraph of ${\cal B}$ induced by the set of vertices in ${\cal G}$ plus those in $X_{\geq 3}$. Since $n_3 \leq 2n_g + 4g(n) \leq 6 n_g$, the number of vertices in $G$ is at most $7n_g$. Since the genus of $G$ is at most $g(n)$, by Lemma~\ref{lem:Euler}, the number of edges in $G$ is at most $21n_g + 6g(n) \leq 27 n_g$. Let $Y_{\geq 3}$ be the set of variables in $X_{\geq 3}$ of degree at least $27n_g/\log{n}$ in $G$. Since the number of edges in $G$ is at most $27n_G$, it follows that $|Y_{\geq 3}| \leq \log{n}$. In time $(\log{n})^k$, which is fpt-time by Lemma~\ref{lem:fpttime}, we can enumerate each subset of $Y_{\geq 3}$ of size at most $k$ as a candidate subset of variables that are assigned value 1 by a satisfying assignment of weight $k$. For each such {\em nonempty} candidate subset, $C$ is updated appropriately (as in {\bf Case 1} above) and $k$ is decreased by at least the size of the subset, which is nonzero, and we can repeat the execution of the whole algorithm ${\cal A}$; this algorithm will be repeated at most $k$ times. If the candidate subset is empty, then along this branch we reject the instance $(C, k)$ since $C$ cannot be satisfied by an assignment of weight $k$. The preceding statement can be justified as follows. In any satisfying assignment, the critical gates, whose number is $n_g > n^{1/\mu(n)}$, must be satisfied. Since the chosen subset of $Y_{\geq 3}$ is empty, we are working under the assumption that no variable in $Y_{\geq 3}$ is assigned 1 by any satisfying assignment. Therefore, the variables assigned 1 by any satisfying assignment must be chosen from $X_{\geq 3} - Y_{\geq 3}$ or from $X_{\leq 2}$. Each variable in $X_{\geq 3} - Y_{\geq 3}$ can satisfy at most $27n_g/\log{n}$ critical gates in $C$, and each variable in $X_{\leq 2}$ can satisfy at most 2 critical gates. Therefore, $k$ variables from $(X_{\geq 3} - Y_{\geq 3}) \cup X_{\leq 2}$ can satisfy at most $27kn_g/\log{n} < n_g$ critical gates in $C$, and hence cannot satisfy $C$. We assumed here that $k < \log{n}/27$; otherwise, we can decide the instance in fpt-time from the beginning.

It follows that the algorithm ${\cal A}$ outlined above runs in fpt-time, and either solves the instance $(C, k)$, or reduces it to $h(k)n^{O(1)}$ many instances $(C', k')$ ($k' < k$), such that $(C, k)$ is a yes-instance if and only if at least one of the instances $(C', k')$ is, and such that each of the instances $(C', k')$ satisfies conditions (1), (2), and (3) in the statement of the theorem.
\end{proof}

\begin{theorem}
\label{cor:wsat}
The {\sc wsat$^+$} problem on circuits of genus $g(n) = n^{o(1)}$ is $FPT$.
\end{theorem}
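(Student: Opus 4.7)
The plan is to apply Theorem~\ref{thm:general} with $t=2$. Given an instance $(C,k)$ of {\sc wsat$^+$} on a circuit of genus $g(n) = n^{o(1)}$, I would first invoke Theorem~\ref{thm:general} to reduce $(C,k)$ in fpt-time to $h(k)n^{O(1)}$ instances $(C', k')$, each satisfying the three listed properties. The payoff of taking $t=2$ is that the tree $T_{C'}$ has depth $1$, so every critical gate $p$ of $C'$ is an {\sc or}-gate that receives its inputs directly from variables; hence the subtree $T_p$ appearing in the second condition reduces to the single vertex $p$ itself. The second condition therefore becomes the statement that \emph{every variable of $C'$ occurs in at most two clauses}, while the first condition says $C'$ has at most $2k'$ clauses.

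Next I would solve each simplified instance in polynomial time. Associate with $C'$ the multigraph $G_{C'}$ whose vertex set is the set of clauses of $C'$, where each variable occurring in two clauses contributes an edge between the corresponding vertices, and each variable occurring in exactly one clause contributes a pendant half-edge (equivalently, a self-loop) at that vertex; variables not occurring in any clause are irrelevant to satisfiability. A truth assignment to the variables of $C'$ satisfies $C'$ if and only if the set of variables it assigns $1$ covers every vertex of $G_{C'}$, so the minimum weight of a satisfying assignment of $C'$ equals the size of a minimum edge cover of $G_{C'}$. Minimum edge cover is polynomial-time solvable via the classical reduction to maximum matching.

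Since $C'$ is monotone, the family of satisfying assignments is upward-closed under the pointwise order; hence a weight-exactly-$k'$ satisfying assignment exists if and only if the minimum edge cover of $G_{C'}$ has size at most $k'$ \emph{and} $C'$ has at least $k'$ variables in total, so that any remaining deficit can be padded by flipping arbitrary extra variables to $1$. Both quantities are computable in polynomial time, and $(C,k)$ is a {\sc yes}-instance exactly when at least one of the $h(k)n^{O(1)}$ reduced instances is accepted. The total running time is therefore of the form $h(k) \cdot n^{O(1)}$.

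The main obstacle is recognizing the correct specialization of Theorem~\ref{thm:general}: once one notices that for $t=2$ the ``at most two subtrees'' condition collapses into ``each variable occurs in at most two clauses,'' what remains is classical edge cover, which is solvable in polynomial time. A minor subtlety worth spelling out in the proof is handling weight \emph{exactly} $k'$ versus \emph{at most} $k'$, which is where monotonicity and the padding argument enter.
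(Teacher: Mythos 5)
Your proposal is correct, and its first (and decisive) step coincides with the paper's: invoke Theorem~\ref{thm:general} and observe that for $t=2$ each subtree $T_p$ rooted at a critical gate is the single gate $p$, so in each reduced instance $(C',k')$ there are at most $2k'$ clauses and every variable occurs in at most two of them. Where you diverge is in how the reduced instances are finished off. The paper applies two further reduction rules (removing duplicate clauses on the same variable pair and removing dominated variables) to shrink $C'$ to a kernel of size $O(k'^2)$, which it then decides by brute force in fpt-time. You instead recognize the reduced instance as a minimum edge cover problem on a multigraph whose vertices are the clauses and whose edges are the variables (with degree-one variables as loops), solve it in polynomial time via the classical matching reduction, and convert ``minimum weight'' to ``weight exactly $k'$'' using monotonicity and padding --- a step the paper glosses over but which is worth making explicit, as you do. Your route yields a polynomial-time (rather than exponential-in-$k'$) resolution of each reduced instance and is arguably more transparent; the paper's route is more self-contained and produces a quadratic kernel as a byproduct. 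Two small points to tighten: justify that loops cause no difficulty in the edge-cover computation (e.g., a vertex all of whose incident edges are loops is covered by one loop at cost $1$, and the rest reduces to Gallai's identity), and note that variables not occurring in any clause, while irrelevant to satisfiability, must still be counted in the ``at least $k'$ variables'' padding condition --- both of which your write-up already gestures at. Overall the argument is sound.
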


\begin{proof}
By Theorem~\ref{thm:general}, in fpt-time we can reduce an instance $(C, k)$ of {\sc wsat$^+$} on circuits of genus $g(n) = n^{o(1)}$ to $h(k)n^{O(1)}$ many instances $(C', k')$ ($k' < k$) of {\sc wsat$^+$}, where $h$ is a complexity function of $k$ and $k' \leq k$, such that $(C, k)$ is a yes-instance if and only if at least one of the instances $(C', k')$ is, and such that each instance $(C', k')$ satisfies that: (1) the number of critical gates in $C'$ is at most $2k'$, and (2) every variable in $C'$ is incoming to gates in at most two subtrees $T_p, T_q$ of $T_C'$, rooted at critical gates $p, q$ in $C'$. Therefore, it suffices to show that we can decide each such instance $(C', k')$ in fpt-time.

First, observe that since each subtree $T_p$ rooted at a critical gate $p$ consists of a single critical gate of $C'$, each variable in $C'$ has outdegree at most 2; that is, each variable in $C'$ is incoming to at most two gates in $C'$.

For two variables $x_i$ and $x_j$ in $C'$, if the set of gates that $x_i$ is incoming to is a subset of that of $x_j$, then we say that $x_j$ {\em dominates} $x_i$.
We perform the following reductions. If more than $k'$ variables are incoming to the output gate of $C'$, then clearly $C'$ has no satisfying assignment of weight at most $k'$, and we reject the instance $(C', k')$. Otherwise, we can assign the value 1 to the variables incoming to the output gate of $C'$, remove them, and update $C'$ and $k'$ accordingly. For any two 2-literal gates that have the same pair of variables incoming to them, we remove one of the two gates from $C'$. So assume, without loss of generality, that the given instance $(C', k')$ satisfies that $C'$ contains no variables that are incoming to its output gate, and that there are no two 2-literal gates in $C'$ that have the same pair of variables incoming to them. For every two variables $x_i$ and $x_j$ in $C$, if $x_i$ dominates $x_j$ then remove $x_j$. After applying the previous reductions, it is easy to see that the number of degree-1 variables is at most $2k'$, and the number of degree-2 variables is at most ${2k' \choose 2}$. It follows that the resulting circuit has size $O(k'^2)$, and in fpt-time we can decide if $C'$ has a satisfying assignment of weight $k'$. This completes the proof.
\end{proof}

Combining the above theorem with Theorem~\ref{lem:hardnessmonotone}, we obtain a complete characterization of the parameterized complexity of {\sc wsat$^+$} in terms of the genus of the circuit:

\begin{theorem}
\label{thm:mainmonotonewsat}
The {\sc wsat$^+$} problem on circuits of genus $g(n)$ is $FPT$ if $g(n) = n^{o(1)}$, and $W[2]$-complete if $g(n) = n^{O(1)}$, where $n$ is the number of variables in the circuit.
\end{theorem}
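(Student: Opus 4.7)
The statement is essentially a packaging result that combines two earlier theorems, so the plan is short. The $FPT$ direction is handled directly by Theorem~\ref{cor:wsat}, which already establishes that {\sc wsat$^+$} on circuits of genus $g(n) = n^{o(1)}$ is in $FPT$. Thus the only thing I need to do for the tractability side is invoke that result verbatim.

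For the hardness direction, I plan to specialize Theorem~\ref{lem:hardnessmonotone} to $t = 2$. Since $t=2$ is even, that theorem yields $W[t]$-completeness, i.e., $W[2]$-completeness, of {\sc wsat$^+[2]$} on circuits of genus $\Omega(n^c)$ for any constant $c > 0$. Because $g(n) = n^{O(1)}$ means $g(n) \leq n^{c}$ for some constant $c$ (and we only need a matching lower bound construction on the hardness side, which is what Theorem~\ref{lem:hardnessmonotone} provides), we obtain $W[2]$-hardness in this regime; membership in $W[2]$ follows from the well-known fact~\cite{fptbook,grohebook} that {\sc wsat$^+[2]$} (weighted monotone {\sc cnf-sat}) is $W[2]$-complete in general, which still holds when we restrict the input to circuits of polynomially bounded genus.

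There is no genuine obstacle here: both halves have already been proved in the paper, and the theorem is simply their juxtaposition. The only point one has to be slightly careful about is the quantifier on $c$ in ``$g(n) = n^{O(1)}$'': the hardness given by Theorem~\ref{lem:hardnessmonotone} is for a fixed constant lower bound $\Omega(n^c)$, but an fpt-reduction from the unrestricted {\sc wsat$^+[2]$} problem can always pad a given circuit to land inside any polynomial upper envelope of the form $n^{O(1)}$, so the hardness transfers without loss to the full $n^{O(1)}$ regime. I would simply note this padding observation (which is exactly the fpt-reduction inside the proof of Theorem~\ref{lem:hardnessmonotone}) and conclude.
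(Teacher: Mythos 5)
Your proposal is correct and matches the paper's own proof, which likewise obtains this theorem simply by combining Theorem~\ref{cor:wsat} (the $FPT$ result for genus $n^{o(1)}$) with the $t=2$ case of Theorem~\ref{lem:hardnessmonotone} (the $W[2]$-completeness for genus $\Omega(n^c)$). Your extra remark about the quantifier on $c$ and the padding in the fpt-reduction is a fair clarification but introduces nothing beyond what the paper's argument already contains.
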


We follow the exact terminology of~\cite{dht}. Let $G$ be a graph, and let $V' \subseteq V(G)$ and $E' \subseteq E(G)$ be such that every vertex in $V'$ is an endpoint of some edge in $E'$. Let $G^-$ be the graph obtained
from $G$ by removing the vertices in $V'$ and the edges in $E'$. $G$ is said to be $(V', E')$-{\em embeddable} (in the plane) if $G^-$ is embeddable in the plane. The vertices in $V'$ and the edges in $E'$ are called {\em flying}. The flying edges are partitioned into: (1) {\em bridges}, those are the edges whose both endpoints are in $G^-$; (2) {\em pillars}, those are the edges with exactly one endpoint in $G^-$; and (3) {\em clouds}, those are the edges whose both endpoints are not in $G^-$ (i.e., are in $V'$).
A {\em partially triangulated $(r \times r)$-grid} is a graph that contains the $(r \times r)$-grid as a subgraph, and is itself a subgraph of a triangulation of the $(r \times r)$-grid. A graph $G$ is called an
$(r, \ell)$-{\em gridoid} if it is $(V', E')$-embeddable for some $V', E'$ such that $G^-$ is a partially triangulated $(r'\times r')$-grid for some $r' \geq r$, and $E'$ contains at most $\ell$ edges and no clouds (i.e., $E'$ contains no edges whose both endpoints are in $V'$). The following result was proved in~\cite{dht}:

\begin{theorem}[\cite{dht}]
\label{thm:dht}
If a graph $G$ of genus $g$ excludes all $(\lambda - 12g, g)$-gridoids as contractions, for some $\lambda \geq 12g$, then the branchwidth of $G$ is at most $4\lambda(g+1)$.
\end{theorem}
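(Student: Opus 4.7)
The plan is to establish the contrapositive: if the branchwidth of $G$ exceeds $4\lambda(g+1)$, then $G$ contains some $(\lambda-12g, g)$-gridoid as a contraction. The target gridoid definition gives a clue about the proof architecture: removing $V'$ and a set $E'$ of at most $g$ non-cloud edges should expose a partially triangulated grid. So the proof should be a two-stage reduction, first planarizing a genus-$g$ graph by paying a cost proportional to $g$, then invoking the planar excluded-grid theorem on what remains.

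Concretely, I would first apply a handle-cutting argument: every graph of genus $g$ admits a set of at most $O(g)$ edges (together with possibly $O(g)$ incident vertices) whose removal yields a planar subgraph. I would choose these ``flying'' edges and vertices so that every removed edge is a bridge or a pillar (never a cloud), which is the combinatorial content of cutting along representatives of a basis of non-contractible cycles on the surface of genus $g$; this is where the ``$g$'' bound on $|E'|$ in the gridoid definition comes from. The next step is to control how much branchwidth is lost: removing one edge can decrease branchwidth by at most $1$, and removing one vertex by at most $1$, so after the $O(g)$ deletions we still have a planar graph $G^-$ of branchwidth more than $4\lambda(g+1) - O(g) \ge 4(\lambda - 12g)$, provided the constants are tracked in the style of Demaine--Hajiaghayi and the ``$4\lambda(g+1)$'' bound in the statement. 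I would then invoke the planar excluded-grid theorem in its sharpest form (branchwidth greater than $4r$ forces an $(r\times r)$-grid, indeed a partially triangulated one, as a contraction) to extract a partially triangulated $(r'\times r')$-grid with $r' \ge \lambda - 12g$ as a contraction of $G^-$. Finally, I would lift this contraction back to $G$ by reinstating the flying edges and vertices; since their incidences were preserved, the resulting contraction of $G$ is exactly a $(V', E')$-embeddable graph meeting the gridoid definition.

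The main obstacle is the precise bookkeeping of constants. The multiplicative $(g+1)$ factor in ``$4\lambda(g+1)$'' is not explained by a naïve ``one edge deletion $=$ one unit of branchwidth'' accounting; it arises from the fact that cutting a handle affects branchwidth by an amount that scales with both the size of the emerging grid and the genus, and the sharper form of the planar excluded-grid theorem used in the reduction only yields a grid as a contraction (not merely a minor), which in turn requires preserving triangulation structure when the flying edges are reinstated. Making this quantitatively tight, and verifying that no cloud edges are produced during the planarization, is the delicate part; everything else is either a direct invocation of the planar excluded-grid result or an elementary deletion/contraction manipulation. Since the statement is quoted verbatim from \cite{dht}, the cleanest route in the present paper is simply to cite that work, but the sketch above is the approach I would pursue to reconstruct the proof.
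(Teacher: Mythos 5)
The paper does not prove this statement: it is quoted verbatim from \cite{dht} and used as a black box, so there is no internal proof to compare your sketch against. Judging the reconstruction on its own terms, there is a genuine gap at its very first step.

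Your planarization claim --- that every graph of genus $g$ has a set of $O(g)$ edges (plus their endpoints) whose removal leaves a planar graph --- is false. The minimum number of edge deletions needed to planarize a graph (its skewness) is not bounded by any function of the genus: the toroidal grid $C_n \times C_n$ has genus $1$, yet any planarizing edge set must destroy every noncontractible cycle and therefore has size $\Omega(n)$. So the proposed two-stage reduction collapses before the branchwidth bookkeeping even begins, and no amount of constant-tracking repairs it. The argument in \cite{dht} is instead an induction on the genus that cuts the surface along a noncontractible \emph{noose} (a closed curve meeting the graph only in vertices); each splitting reduces the genus by at least one while changing the branchwidth and the embedding in a controlled way, and the at most $g$ flying edges of the resulting gridoid come from these $g$ splittings, not from a global planarizing edge set. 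The reason this works where deletion fails is that a noose may be long, but the damage it causes is absorbed by the fact that the conclusion only asks for a gridoid as a \emph{contraction} of $G$, not as a subgraph --- exactly the slack your sketch does not exploit. Your instinct that the $(g+1)$ factor reflects a per-handle cost is right, but the per-handle operation is a surface cut, not an $O(1)$-edge deletion. Since the theorem is used in this paper purely as an imported tool, stating it with the citation, as the authors do, is the appropriate treatment here.
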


We shall assume, without loss of generality, that in any instance $(C, k)$ of the problem, $k < g(n)$ and $k$ is larger than any prespecified constant; those instance that violate either of the preceding conditions can be decided in fpt-time (for some fixed time complexity function that depends on $g(n)$ and the size of $C$). We have the following result:

\begin{lemma}\label{lem:bw}
Let $(C, k)$ be an instance of {\sc wsat$^+[t]$} ($t \geq 2)$ such that $C$ has genus $g(n)$ and at most $2k$ critical gates, where $n$ is the number of variables in $C$. Let $C^-$ be the circuit resulting from $C$ after removing the output gate. The branchwidth of the underlying graph of $C^-$ is $O(g^2(n))$.
\end{lemma}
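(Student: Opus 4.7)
The plan is to invoke Theorem~\ref{thm:dht} on the underlying undirected graph $G$ of $C^-$, whose genus is at most $g(n)$. Choosing $\lambda = \Theta(g(n))$, specifically $\lambda = 17g(n)$, the theorem yields $\mathrm{bw}(G) \leq 4\lambda(g(n)+1) = O(g^2(n))$ provided that $G$ excludes every $(5g(n), g(n))$-gridoid as a contraction.

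To establish this exclusion, I would assume for contradiction the existence of a contraction $G'$ of $G$ admitting a flying set $V' \subseteq V(G')$ with $|V'| \leq 2g(n)$ and flying edges $E' \subseteq E(G')$ with $|E'| \leq g(n)$ such that $G' - V' - E'$ contains a partially triangulated $r \times r$ grid $\Gamma$ with $r \geq 5g(n)$; note that $\Gamma$ has treewidth $r$. The structural lever to pull is the one introduced in the proof of Theorem~\ref{thm:general}: contracting each subtree $T_p$ rooted at a critical gate $p$ to a single vertex transforms $C^-$ into the bipartite graph $\mathcal{B}$ whose at most $2k \leq 2g(n)$ critical-gate vertices form a vertex cover whose removal leaves only an independent set of variables. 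Consequently $\mathcal{B}$ has treewidth at most $2k \leq 2g(n)$.

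Each of the $r^2$ cells of $\Gamma$ corresponds to a pairwise disjoint connected subset of $V(C^-)$. Pigeonholing over the at most $2g(n)$ critical gates of $C^-$, each of which can appear in at most one row and one column of $\Gamma$, yields an $(r - 2g(n)) \times (r - 2g(n)) = 3g(n) \times 3g(n)$ sub-grid $\Gamma'$ whose cells lie entirely in $V(C^-) \setminus \{\text{critical gates}\}$; this sub-grid has treewidth $3g(n)$, strictly greater than the treewidth $2g(n)$ of $\mathcal{B}$. The final step is to show that $\Gamma'$ is also a minor of $\mathcal{B}$, which would contradict $\mathrm{tw}(\mathcal{B}) \leq 2g(n)$ and complete the argument.

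The main obstacle I anticipate is this last minor-transfer step. The natural projection of a $\Gamma'$-cell into $\mathcal{B}$ identifies each non-critical gate of some $T_p$ with the supervertex $p'$, which can cause two distinct cells of $\Gamma'$ to overlap in $\mathcal{B}$ whenever they both contain non-critical gates from a common $T_p$. A further pigeonhole refinement — or a cleverer extraction of the sub-grid that respects the partition of $V(C^-)$ into $\{T_p\}_p \cup \{\text{variables}\}$ — is needed to ensure injectivity of the projection on cells; the remaining $\mathcal{B}$-side constraints (connectivity of projected cells, and preservation of grid-edge incidences) follow from the fact that every edge of $C^-$ either lies within some $T_p$ (and is contracted in $\mathcal{B}$) or else crosses between two $T_p$'s or is incident on a variable (and survives in $\mathcal{B}$).
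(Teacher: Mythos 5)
Your reduction to Theorem~\ref{thm:dht} and your bookkeeping of the flying parts are fine, but the minor-transfer step you flag as "the main obstacle" is not a technical wrinkle --- it is a fatal gap, and no pigeonhole refinement can close it. The reason is that your argument never uses the hypothesis that $C$ is a $\Pi_t$ circuit of \emph{bounded depth} $t$, and without that hypothesis the statement is false. Concretely: take the $r\times r$ grid, let the gates be its vertices joined into a single tree $T_p$ by a Hamiltonian (snake) path rooted at one critical gate $p$, and realize every remaining grid edge $uv$ by a fresh variable adjacent to $u$ and $v$. This is a planar (genus $0$) monotone normalized circuit with one critical gate whose underlying graph of $C^-$ contains a subdivision of the $r\times r$ grid, hence has branchwidth $\Theta(r)=\Theta(\sqrt{n})$, while $O(g^2(n))$ would be a constant. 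Its depth is $\Theta(r^2)$, which is the only hypothesis it violates. In this example your graph $\mathcal{B}$ is a star of treewidth $1$, because \emph{every} cell of the grid contains gates of the single tree $T_p$ and they all collapse onto the supervertex $p'$. Deleting the rows and columns containing critical gates does not help: it is the non-critical gates of a subtree $T_p$, which can thread through arbitrarily many cells, that destroy the disjointness of the projected branch sets, and a connected $T_p$ can meet $\Omega(r^2)$ cells. So the sub-grid $\Gamma'$ is simply not a minor of $\mathcal{B}$ in general, and $\operatorname{tw}(\mathcal{B})\le 2k$ carries no information about $\operatorname{tw}(C^-)$.

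The paper's proof avoids $\mathcal{B}$ entirely and uses the depth bound as the central lever: since $C$ has depth at most $t$, every vertex of $C^-$ is within distance $t$ of one of the at most $2k$ critical gates, so the (contracted) grid $G^-$ would have to be dominated at radius $t$ by the at most $2k$ images of critical gates together with the at most $2g(n)$ anchor endpoints of flying edges. A radius-$t$ ball around a grid vertex contains only $O(t^2)$ grid vertices, so the grid has $O(t^2)(k+g(n))=O(g(n))$ vertices, contradicting $r^2=\Omega(kg(n))$ for $k$ large. If you want to salvage your write-up, you must inject the depth bound in this way (or an equivalent one); as it stands, the argument proves something that is not true.
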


\begin{proof}
We will show that the underlying graph of $C^-$ excludes all $(\lceil \sqrt{kg(n)}\rceil, g(n))$-gridoids as contractions. By setting $\lambda = 12g(n) + \lceil \sqrt{kg(n)}\rceil$, the result follows from Theorem~\ref{thm:dht}. (We assumed that $k  < g(n)$.)

Suppose, to get a contradiction, that the underlying graph of $C^-$ contains an $(r, g(n))$-gridoid $G$ as a contraction, for some integer $r \geq \lceil \sqrt{kg(n)} \rceil$. Since the depth of $C$ is at most $t$, every literal and gate in $C^-$ is within distance (i.e., length of a shortest path) at most $t$ from some critical gate of $C$. Let $S$ be the set of vertices in $G$, each of which either corresponds to a critical gate of $C$ or to a contraction of a critical gate of $C$, and note that $|S| \leq 2k$. Clearly, every vertex in $G$ must be within distance at most $t$ from one of the vertices in $S$. Embed $G$ in the plane, and let $G^-$, $E'$ and $V'$ be as in the definition of the gridoid. Note that $E'$ contains at most $g(n)$ edges, and each edge of $E'$ must be incident to at least one vertex in $G^-$. Call an endpoint of an edge in $E'$ that is in $G^-$ an {\em anchor vertex}. Since $|E'| \leq g(n)$, it follows that the number of anchor vertices is at most $2g(n)$. There is a path of length at most $t$ from every vertex $v$ in the partially-triangulated grid $G^-$ to some vertex in $S$; fix such a path for every vertex $v$ in $G^-$, and denote it by $P_v$. Since the number of grid vertices at distant at most $t$ from some grid vertex is $O(t^2)$, the number of paths $P_v$ that pass through a fixed anchor vertex is $O(t^2)$. Therefore, the number of grid vertices $v$ whose paths $P_v$ go through anchor vertices is $O(t^2) \cdot 2g(n) =O(g(n))$. For any other vertex $v$, its path $P_v$ lies completely within $G^-$, and hence the number of such vertices $v$ is $O(t^2) \cdot |S| =O(k)$. Since for every vertex $v$ in $G^-$, $P_v$ either goes through an anchor vertex or lies completely within the grid, the number of grid vertices is at most $O(g(n)) + O(k)= O(g(n))$ (we assumed that $k  < g(n)$). Since the number of vertices in $G^-$ is at least $r^2 = \Omega(kg)$, this is a contradiction since $k$ can be chosen to be larger than any prespecified constant, and in such case there would be grid vertices that are not within distance $t$ from any vertex in $S$.
\end{proof}

\begin{definition}
Let $G=(V, E)$ be a graph. A {\it tree decomposition} of $G$ is
a pair $({\cal V}, {\cal T})$ where ${\cal V}$ is a collection of
subsets of $V$ such that $\bigcup_{X_i \in {\cal V}} = V$,
and ${\cal T}$ is a tree whose node set is $\cal V$, such that:

\begin{enumerate}
\item for every edge $\{u, v\} \in E$, there is an $X_i \in {\cal V}$, such
 that $\{u, v\} \subseteq  X_i$;

\item for all $X_i, X_j, X_k \in {\cal V}$, if the node $X_j$ lies on the
   path between the nodes $X_i$ and $X_k$ in the tree ${\cal T}$, then
   $X_i \cap X_k \subseteq X_j$;

The {\it width} of the tree decomposition $({\cal V}, {\cal T})$ is
defined to be $\max\{|X_i| \mid X_i \in {\cal V} \} - 1$. The
{\it treewidth} of the graph $G$ is the minimum tree width over all
tree decompositions of $G$.
\end{enumerate}
\end{definition}

A tree decomposition $({\cal V}, {\cal T})$ is {\it nice} if it satisfies
the following conditions:

\begin{enumerate}

\item Each node in the tree ${\cal T}$ has at most two children.

\item If a node $X_i$ has two children $X_j$ and $X_k$ in the tree ${\cal T}$,
   then $X_i=X_j=X_k$; in this case node $X_i$ is called a {\em join node}.

\item If a node $X_i$ has only one child $X_j$ in the tree ${\cal T}$, then either
   $|X_i|=|X_j|+1$ and $X_j \subset X_i$, and in this case $X_i$ is called an {\em insert node}; or $|X_i|=|X_j|-1$ and
   $X_i\subset X_j$, and in this case $X_i$ is called a {\em forget node}.
\end{enumerate}

\begin{theorem}
\label{thm:treewidth}
Let $C$ be a $\Pi_t^+$ circuit, and let $G = (V,E)$ be the undirected underlying graph of $C$ with the output gate removed. If a tree decomposition of width $\omega$ for $G$ is given, then a minimum weight satisfying assignment of $C$ can be computed in time $2^{O(\omega)}N^{O(1)}$, where $N$ is the number of nodes in the given tree decomposition.
\end{theorem}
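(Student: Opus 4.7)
The plan is to convert the given tree decomposition into an equivalent \emph{nice} tree decomposition of the same width $\omega$ and with $O(\omega N)$ nodes in polynomial time, and then run a standard bottom-up dynamic program on it. At each node $X_i$ I maintain a table indexed by a ``configuration'' $\pi$ on $X_i$, whose entry records the minimum number of variables set to $1$ in the portion of the circuit already processed in the subtree rooted at $X_i$ that is consistent with $\pi$; the minimum weight of a satisfying assignment of $C$ is then read off at the root.

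A configuration on $X_i$ consists of (a) a Boolean assignment $\sigma(x)\in\{0,1\}$ to each variable $x\in X_i$, and (b) for each gate $g\in X_i$ an augmented state encoding a committed output value $c_g\in\{0,1\}$ together with a flag indicating whether $c_g$ has already been forced by the inputs of $g$ processed so far. Concretely, an {\sc and}-gate committed to $1$ requires every processed (and every future) input to equal $1$; an {\sc and}-gate committed to $0$ is either ``achieved'' (some processed input equals $0$) or ``pending'' (still awaiting a $0$ input); {\sc or}-gates are handled symmetrically. Hence each gate contributes at most three possible states and the number of configurations per bag is at most $2^{\omega+1}\cdot 3^{\omega+1}=2^{O(\omega)}$.

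The DP transitions are the usual ones. At an introduce node that adds a vertex $v$ to a bag $X_j$, I enumerate the new state of $v$, and for each circuit edge from $v$ to a neighbor already in the bag, the state of the ``downstream'' gate is updated (or the configuration is discarded if the update is inconsistent, for instance if an {\sc and}-gate committed to $1$ receives a $0$ input); if $v$ is a variable, $\sigma(v)$ is added to the running weight. At a forget node that removes a vertex $v$, I require the state of $v$ to be finalized (``pending'' states are rejected) and, if $v$ is a critical gate of $C$, I additionally require $c_v=1$; this enforces satisfaction of the removed output {\sc and}-gate, which is the AND of the critical gates. At a join node, the two children's entries for the same $\sigma$ and the same committed values are combined by summing their weights (subtracting the double-counted bag variables) and by OR-ing the ``achieved''/``pending'' flags, reflecting that a still-pending committed value may be witnessed in either subtree.

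Correctness relies on the standard tree-decomposition guarantee that the two endpoints of every edge of $G$ appear together in some bag, so each circuit edge contributes to exactly one state-update along the DP, and on the invariant that when a gate is forgotten all its input edges have been processed. Since each bag has $2^{O(\omega)}$ configurations and each node's transition can be evaluated in $2^{O(\omega)}$ time, the total running time is $2^{O(\omega)} N^{O(1)}$. The main subtlety is the bookkeeping of the ``pending'' flags and their interaction with the join-node combination, so that each committed value is either witnessed by the processed inputs before its gate is forgotten---or the configuration is discarded---while ensuring that no input contributes more than once to the state of its gate.
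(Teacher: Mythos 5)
Your proposal is correct and follows essentially the same route as the paper: a bottom-up dynamic program over a nice tree decomposition with $2^{O(\omega)}$ states per bag, where each gate carries a claimed truth value together with a ``justified/unjustified'' flag, critical gates are forced to be true, unjustified states are rejected at forget time, and join nodes combine children by OR-ing the justification flags and correcting the double-counted weight. The only (cosmetic) difference is that the paper tracks justification only for gates claimed \textsc{true} (black/gray/white), exploiting monotonicity, whereas you track an achieved/pending flag for whichever value is committed; both give three states per gate and the same $2^{O(\omega)}N^{O(1)}$ bound.
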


\begin{proof}
Let $\mathcal{X} = \langle \{X_i \mid i \in {\cal T} \}, {\cal T} \rangle$ be a {\em nice} tree decomposition for the graph $G$. We assume that the tree decomposition is nice; otherwise, based on ${\cal T}$ we can compute a nice tree decomposition of the same width in polynomial time in the size of ${\cal T}$~\cite{kloks}. To simplify the notation, we call a vertex in $G$ a ``variable" (resp. a ``gate") if its corresponding vertex in $C$ is a variable (resp. a gate).

We use a dynamic programming approach to compute a minimum weight satisfying assignment for $C$. Let $X_i = (x_{i_1}, \ldots, x_{i_{n_i}})$ be a bag in $\mathcal{X}$, where each of $x_{i_1}, \ldots, x_{i_{n_i}}$ is either a variable or a gate. For an $x_{i_r} \in X_i$, if $x_{i_r}$ is a variable we assign it either the color ``white'', meaning that its value is 0/{\sc false}, or the color ``black'', meaning that its value is 1/{\sc true}; if $x_{i_r}$ is a gate, we assign it one of three colors: ``black'', ``gray'', and ``white''. Here are the interpretations of the colors, and the rules for assigning them to the gates:

\begin{itemize}
\item black: {\sc true} and justified. For an {\sc or}-gate, this means that one of the vertices incoming to $x_{i_r}$ is colored black or gray; for an {\sc and}-gate, this means that all the vertices incoming to $x_{i_r}$ are black or gray.

\item gray: {\sc true} but unjustified.  For an {\sc or}-gate, this means that every vertex incoming to $x_{i_r}$ is either uncolored or is colored white; for an {\sc and}-gate, this means that at least one vertex incoming to $x_{i_r}$ is uncolored, and the colored vertices incoming to $x_{i_r}$ are black or gray.

\item white: {\sc false}, either justified or unjustified.  For an {\sc or}-gate, this means that every vertex incoming to $x_{i_r}$ is either uncolored or is colored white; for an {\sc and}-gate, this means that one of the vertices incoming to $x_{i_r}$ is either uncolored or is white.

\end{itemize}

A vector $c_i = (c_{i_1}, \ldots, c_{i_{n_i}})$ is called a {\em coloring} of the bag $X_i$, where $c_{i_r}$ is the color of $x_{i_r}$. The weight of a coloring $c_i$ of a bag $X_i$, denoted $W(c_i)$, is the minimum number of variables assigned {\sc true} in the graph induced by the subtree of ${\cal T}$ rooted at $X_i$, under the restriction that $c_i$ is the coloring of $X_i$.

The dynamic programming algorithm will compute valid colorings of the bags in ${\cal T}$ and their weights in a bottom-up fashion starting at the leaves of ${\cal T}$. During this process, we check for validity of the colorings according to the rules of assigning the colors, and purge invalid ones. Additionally, if a gate in $G$ is critical and is colored white, then the coloring is also invalid and purged.

First, for each leaf bag in the tree decomposition, we compute the valid colorings and their weights for this bag. The valid colorings can be computed by enumerating all colorings and checking for their validity according to coloring rules; this takes time $2^{O(\omega)}N^{O(1)}$. Next, we move up the tree from the leaves to the root, computing the colorings and their weights of a parent depending on the colorings and weights of its child (or children). We set the following ground rule regarding the coloring of a vertex shared by a parent (bag) and its child (bag):

\begin{quote}
{\bf Ground Rule}: If the shared vertex is a variable, then its color must be the same in the parent and in the child; if the shared vertex is a gate, then either its color is the same in the parent and in the child, or its color is gray ({\sc true} but unjustified) in the child and black ({\sc true} and justified) in the parent.
\end{quote}
The ground rule is based on the following reasoning: A vertex that is colored black or white does not change its color in a valid coloring; an {\sc and}-gate colored gray can be {\em upgraded} (later) to black when all vertices incoming to it are colored black or gray; and an {\sc or}-gate colored gray can be {\em upgraded} to black when a vertex incoming to it becomes black or gray.

We distinguish three cases according to the types of the nodes in the tree decomposition.

\begin{enumerate}
\item Forget node: Let $X_i$ be the bag of a forget node and $X_j = X_i \cup \{x\}$ be the bag of its child, where $x$ is the vertex to be ``forgotten". The colorings of $X_i$ are the projection of the colorings of $X_j$. The weight of a coloring $c$ of $X_i$ is the minimum weight of the colorings of $X_j$ that produce $c$. Note that by the time a gate $g$ is to be forgotten, it will not be colored gray because by then all vertices incoming to $g$ have been considered, and hence its color should not remain unjustified.

\item Insert node: Let $X_i$ be the bag of an insert node and $X_j = X_i \setminus \{x\}$ be the bag of its child, where $x$ is the vertex to be ``inserted". We will extend the colorings of $X_j$ by assigning $x$ its possible color options. After inserting the new vertex and assigning it a color, a coloring may become invalid, and in which case the coloring is discarded. After inserting the new vertex and assigning it a color, it is possible that some gray gate may be upgraded to black, then it is updated as such. Note that upgrading the color of a vertex $v$ from gray to black does not affect the colors of the vertex that $v$ incoming to (by the coloring rules). The weight of a coloring $c$ of $X_i$ is the minimum weight of the colorings of $X_j$ that produce $c$, plus one if the new vertex $x$ is a variable and is assigned {\sc true}.

\item Join node: Let $X_i$ be the bag of a join node and $X_j$, $X_k$ be the bags of its children, where $X_i = X_j = X_k$. Let $x$ be a vertex in $X_i$. If $x$ is a variable, then the color of $x$ must be the same in $X_i$, $X_j$, and $X_k$ according to the {\bf Ground Rule}. If $x$ is a gate, the color of $x$ can be the same in $X_i$, $X_j$, and $X_k$, or, according to the {\bf Ground Rule}, one of the following cases applies: (1) $x$ is black in $X_i$ and $X_j$, and gray in $X_k$ (or symmetrically, black in $X_i$ and $X_k$, and gray in $X_j$), or (2) $x$ is black in $X_i$, and gray in both $X_j$ and $X_k$. In the following, we discuss these cases based on the type of the gate $x$.

    If $x$ is an {\sc and}-gate, case (1) happens when all the vertices incoming to $x$ are {\sc true} (either justified or unjustified) and all of them are in the subtree rooted at $X_j$ (and hence $X_i$), but not all of them are in the subtree rooted at $X_k$. Case (2) happens when all the vertices incoming to $x$ are {\sc true} (either justified or unjustified), and all of them are in the subtree rooted at $X_i$, but not all of them are in the subtree rooted at $X_j$ or $X_k$.

    If $x$ is an {\sc or}-gate, case (1) happens when a vertex incoming to $x$ is {\sc true} in the resolved portion of the subtree rooted at $X_j$ (and hence $X_i$), but it is not in the subtree rooted at $X_k$. Case (2) is impossible because if $x$ has a vertex incoming to it that is colored black or gray, this vertex should appear in $X_j$ or in $X_k$.

    In each of these cases, if a coloring $c_i$ of $X_i$ is produced from a coloring $c_j$ of $X_j$ and a coloring $c_k$ of $X_k$, then $W(c_i) = \min(W(c_j)+W(c_k) - \#_1(c_i))$ over all colorings $c_j$ and $c_k$ that produce $c_i$, where $\#_1(c_i)$ is the number of variables assigned {\sc true} in coloring $c_i$.

\end{enumerate}
In each of the these three cases, the running time is $2^{O(\omega)}N^{O(1)}$.

Finally, the minimum weight satisfying assignment is the minimum weight of the colorings of the root. The total running time of the dynamic programming algorithm outlined above is $2^{O(\omega)}N^{O(1)}$.
\end{proof}

\begin{theorem}
\label{thm:mainmonotone}
The {\sc wsat$^+[t]$} problem $(t > 2)$ on circuits of genus $g(n) = O(\sqrt{\log{n}})$ is $FPT$, where $n$ is the number of variables in the circuit.
\end{theorem}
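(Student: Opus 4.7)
The plan is to chain Theorem~\ref{thm:general}, Lemma~\ref{lem:bw}, and Theorem~\ref{thm:treewidth} in sequence. Given an instance $(C,k)$ of \textsc{wsat}$^+[t]$ on circuits of genus $g(n)=O(\sqrt{\log n})$, which is certainly $n^{o(1)}$, I first apply Theorem~\ref{thm:general} to reduce $(C,k)$ in fpt-time to $h(k)\,n^{O(1)}$ many instances $(C',k')$ with $k'\le k$, each such that $C'$ has at most $2k'$ critical gates, every variable is incoming to gates in at most two subtrees rooted at critical gates, and the genus of $C'$ is at most $g(n)$. The original instance is a yes-instance iff at least one of these is; so it suffices to solve each reduced instance in fpt-time.

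For a fixed reduced instance $(C',k')$, let $G'$ be the undirected underlying graph of $C'^-$ (that is, $C'$ with its output gate removed). Since the genus of $C'$ is $O(\sqrt{\log n})$ and $C'$ has at most $2k'\le 2k$ critical gates, Lemma~\ref{lem:bw} gives an upper bound of $O(g^2(n)) = O(\log n)$ on the branchwidth of $G'$. Invoking the classical inequality $\mathrm{tw}(G')\le \lfloor 3\,\mathrm{bw}(G')/2\rfloor$ between treewidth and branchwidth (Robertson--Seymour), the treewidth of $G'$ is also $O(\log n)$, and a tree decomposition of $G'$ of width $O(\log n)$ can be produced in polynomial time by a standard constant-factor approximation algorithm for treewidth/branchwidth.

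With such a tree decomposition in hand, I apply Theorem~\ref{thm:treewidth} to $G'$ with parameter $\omega = O(\log n)$, which computes a minimum-weight satisfying assignment of $C'$ in time $2^{O(\omega)}N^{O(1)} = 2^{O(\log n)}N^{O(1)} = N^{O(1)}$, where $N=|C'|$. Since $C'$ is monotone, any satisfying assignment of weight $w$ can be extended to one of any larger weight $\le n'$ (where $n'$ is the number of variables in $C'$) by simply flipping additional variables to $1$; hence $C'$ has a weight-$k'$ satisfying assignment iff the minimum weight returned by the DP is at most $k'$ and $n'\ge k'$. Multiplying the per-instance polynomial time by the $h(k)n^{O(1)}$ many reduced instances and the fpt cost of Theorem~\ref{thm:general} yields a total running time of the form $f(k)\,n^{O(1)}$, establishing membership in $FPT$.

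The crux of the argument --- and the step that carries the quantitative punch --- is the interaction between the two bounds: Theorem~\ref{thm:general} is precisely what is needed to trim the number of critical gates down to $O(k)$, which is the hypothesis Lemma~\ref{lem:bw} requires in order to produce a branchwidth bound polynomial in $g(n)$; then the tight choice $g(n) = O(\sqrt{\log n})$ is exactly what is needed so that $2^{O(g^2(n))}$ is polynomial in $n$, so that the DP of Theorem~\ref{thm:treewidth} runs in polynomial time. The only mild obstacle is the bookkeeping of making sure that the graph to which Lemma~\ref{lem:bw} applies is the same graph on which the DP of Theorem~\ref{thm:treewidth} operates (both are the underlying graph of $C'$ with its output gate removed), and that the approximate tree decomposition can be produced within the fpt time budget, both of which are standard.
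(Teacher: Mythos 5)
Your proposal is correct and follows essentially the same route as the paper: reduce via Theorem~\ref{thm:general} to fpt-many instances with $O(k)$ critical gates, bound the branchwidth by $O(g^2(n))=O(\log n)$ via Lemma~\ref{lem:bw}, convert to a treewidth bound and compute a logarithmic-width tree decomposition (the paper uses Amir's algorithm, whose $2^{O(\mathrm{tw})}$ running time is polynomial in this regime, matching your remark), and run the dynamic program of Theorem~\ref{thm:treewidth} in time $2^{O(\log n)}N^{O(1)}$. Your explicit monotonicity argument for passing from the minimum-weight assignment to a weight-exactly-$k'$ assignment is a small detail the paper leaves implicit, and it is handled correctly.
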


\begin{proof}
Let $(C, k)$ be an instance of {\sc wsat$^+[t]$} on circuits of genus $g(n) \leq c\sqrt{\log{n}}$, for some fixed (known) constant $c > 0$. By Theorem~\ref{thm:general}, in fpt-time we can reduce the instance $(C, k)$ to $h(k)n^{O(1)}$ many instances $(C', k')$ of {\sc wsat$^+[t]$}, where $h$ is a complexity function of $k$ and $k' \leq k$, such that $(C, k)$ is a yes-instance if and only if at least one of the instances $(C', k')$ is, and such that $C'$ has at most $2k'$ critical gates. Therefore, without loss of generality, we may assume that $C$ has at most $2k$ critical gates. By Lemma~\ref{lem:bw}, the branchwidth of $C$ is at most $c_1\log{n}$, for some fixed constant $c_1 > 0$, and hence, by the results of Robertson and Seymour~\cite{rs}, the treewidth of $C$ is at most $c_2 \log{n}$ for some fixed constant $c_2 > 0$. Using the algorithm of Amir~\cite{amir}, we can decide if the treewidth of $C$ is at most $c_3 \log{n}$ for some fixed constant $c_3 > 0$ (if not, the genus does not satisfy the assumed upper bound and we reject the instance), and if so, the algorithm in~\cite{amir} returns a tree decomposition of $C$ of width $c_4\log{n}$, for some constant $c_4 > 0$, in time $2^{O(\log{n})}|C|^{O(1)}=|C|^{O(1)}$. By applying Theorem~\ref{thm:treewidth}, we conclude that we can decide the instance $(C, k)$ in fpt-time.
\end{proof}

\section{Concluding remarks}
\label{sec:conclusion}
In this paper we tried to characterize the parameterized complexity of the canonical monotone and antimonotone {\sc wsat$[t]$} problems in terms of the genus of the circuit. For {\sc wsat$^-[t]$}, the characterization we provided is precise.
For {\sc wsat$^+[t]$}, however, there is still a big gap between the two genus bounds of $O(\sqrt{\log{n}})$ and $n^{o(1)}$. Closing this gap, or even reducing it, is a very interesting question that we leave open.
We mention that several graph problems, including {\sc independent set} on graphs/hypergraphs, {\sc hitting set}, and {\sc red/blue dominating set}, can be reduced to the {\sc wsat$^-[t]$} and {\sc wsat$^+[t]$} problems via fpt-reductions that preserve the genus of the underlying graph. Therefore, the fixed-parameter tractability results for
{\sc wsat$^-[t]$} and {\sc wsat$^+[t]$} obtained in the current paper imply fixed-parameter tractability results for those problems on graphs whose genus satisfies the upper bound requirements.

Similar characterizations of the subexponential-time computability of {\sc wsat$^-[t]$} and {\sc wsat$^+[t]$} in terms of the genus can be obtained. It is not difficult to prove by combining some results in this paper with a standard divide-and-conquer approach based on the separator theorem in~\cite{djidjev}, that {\sc wsat$^-[t]$} and {\sc wsat$^+[t]$} are solvable in subexponential-time if the genus is $o(n)$, and that they are not solvable in subexponential-time if the genus is $\Omega(n)$ unless the exponential-time hypothesis ({\sc ETH})
fails.  We refer the reader to~\cite{genus} for examples of how this standard approach can be applied to obtain such subexponential-time computability results. It would be interesting to see if any characterization of the approximation of the optimization versions of {\sc wsat$^-[t]$} and {\sc wsat$^+[t]$} based on the genus of the circuit can be derived. We also leave this is an open question.

\bibliographystyle{plain}
\bibliography{ref}

\end{document}